\documentclass[11pt]{article}
\usepackage{efron}

\begin{document}

\title{A bootstrap approach to prediction-powered inference}

\author{Bradley Efron\footnote{Sequoia Hall, 390 Jane Stanford Way, Stanford, CA 94305-4020; \texttt{efron@stanford.edu}}\\\textit{Stanford University}}

\maketitle

\begin{abstract}
Prediction-powered inference (PPI) refers to a two-level situation where the statistician observes a set of $(x,y)$ pairs and another set of $x$s with the responses $y$ missing. Also available is some independent background data from which a prediction rule $f(x)$ has been produced, perhaps by a machine learning algorithm; $f(x)$ approximates $E\{y\mid x\}$ but there is no guarantee of its accuracy for the situation at hand. \cite{angel} developed an algorithm that makes use of all the data, including the unlabeled $x$s, for the estimation of a parameter of interest. A different algorithm is proposed here, using the bootstrap to avoid asymptotics, that is shown to have advantages of efficiency and generality. It is similar in spirit to the original PPI paper by \cite{wang}. Prediction-powered inference raises questions about the information available in unlabeled data, with some surprises here, particularly concerning the estimation of the expected value of $y$.

\bigskip

\noindent\textit{Keywords:} binary response, synthetic data, machine learning, prediction rules

\end{abstract}

\section{Introduction}\label{sec1}

In an influential paper, \cite{angel} introduced the term ``prediction-powered inference'' or PPI for their treatment of the following situation. We observe a labeled data set of covariate vectors $x$ and $y$,
\begin{equation}
\textit{Labeled data:}\quad\left\{\left(x_a(i),y_a(i)\right),\ i=1,\dots,n_a\right\},
\label{11}
\end{equation}
and also an unlabeled data set where the responses are missing,
\begin{equation}
\textit{Unlabeled data:}\quad\left\{\left(x_b(i)\right),\ i=1,\dots,n_b\right\}.
\label{12}
\end{equation}
(Subscripts $a$ and $b$ will always refer to labeled and unlabeled cases, respectively.)

We wish to estimate some parameter $\theta$, for instance,
\begin{equation}
\theta=E\{y\}\quad\text{or}\quad\theta=\text{correlation}(x,y),
\label{13}
\end{equation}
with the hope of doing better than simply ignoring the unlabeled data and relying entirely on the labeled set. If $n_b$ is much larger than $n_a$ --- a common situation in survey work --- there seems to be a chance for substantial gains hidden in the unlabeled data.

To this end, prediction-powered inference makes a crucial assumption: the existence of a \textit{prediction rule} $f$ based on background data independent of \eqref{11} and \eqref{12}. Ideally, $f(x)$ should accurately approximate the conditional expectation of $y$ given $x$,
\begin{equation}
f(x)=E\{y\mid x\},
\label{14}
\end{equation}
in which case it could very well help infer the missing responses of the unlabeled data. This is a lot to hope for. The ``background data'' may be of uncertain relevance to the current situation. \citeauthor{angel} use a bias-correction device as a remedy (see the examples in \ref{sec3} and \ref{sec5}) while \cite{zrnic} use cross- validation debiasing.

This paper takes a different tack. A straightforward GLM (generalized linear model) approach to prediction-powered inference for binary data (that is, when all the $y$s are 0 or 1) is introduced in \ref{sec2}, with the prediction rule $f$ determining the GLM structure matrix. \ref{sec3} discusses \R1, a bootstrap-based algorithm for carrying out the PPI-GLM calculations when the responses are binary. The important special case $\theta=E\{y\}$ is examined in \ref{sec4}. \ref{sec5} discusses non-binary responses and the corresponding bootstrap algorithm \R2. Side issues of interest are presented as Remarks at the end of each section.

Prediction-powered inference is well suited to current practice, where powerful machine learning algorithms painlessly produce effective prediction rules $f$. As an example, \citeauthor{angel} discuss \textit{Galaxy Zoo 2} \citep{willett}: some 250,000 galaxies were laboriously labeled as either spiral ($y=1$) or non-spiral ($y=0$) by internet volunteers. Using this data, the machine learning algorithm ResNet produced a prediction rule $f$ which then could be used to classify the millions of unlabeled galaxies.

An astronomer investigating a special population of galaxies, perhaps those in a small corner of the sky, might be grateful for the ResNet prediction rule but worry that it could perform inaccurately in the special population. This is where the adjustment procedures in \cite{angel} and this paper come into play. The examples featured in what follows concern real data but smaller data sets, where it is easier to see both the advantages and weaknesses of PPI.

The idea of prediction-powered inference, though not the name, was pioneered in the path-breaking paper by \cite{wang}. Their treatment (which partly relies on bootstrap computations) is close in spirit if not detail to the present paper. The discussion in \ref{sec5} covers similar ground to \citeauthor{wang}, as outlined in \ref{rem12}. \cite{motwani} criticized \citeauthor{wang}'s algorithm on the grounds of estimation bias, asking what parameter it was actually estimating; this question is answered for \R1 and \R2 in \ref{sec3} and \ref{sec5}.

\cite{angel} employ asymptotic approximations to assess the width of their proposed PPI confidence intervals. Improved versions of the asymptotics are considered in \cite{angelx}. \R1 and \R2 use bootstrapping to avoid the asymptotics. \ref{sec3} and \ref{sec5} include comparisons of the two methods.

Going further afield, the bootstrap replicates used in \R1 and \R2 can be thought of as \textit{synthetic data} (\citealp{jordon,shen} and many others). The PPI-GLM framework of \ref{sec2}--\ref{sec4} is useful in demonstrating what's at stake in using synthetic data for statistical inference. A principal goal of what follows is putting prediction-powered inference into the context of classical statistical modeling.

The labeled/unlabeled data structure \eqref{11}--\eqref{12} isn't unique to PPI: an epidemiological study might obtain disease status $y$ for a small portion of its subjects but only covariates $x$ for the great majority. A unique feature of PPI is the employment of prediction algorithms $f$ that now are available from large databases and powerful machine learning algorithms. What follows here is intended to make PPI methods easy to use in a wide variety of applications, while giving readers a clear picture of their strengths and limitations.

\begin{notn}
Boldface symbols will denote vectors and matrices,
\begin{equation}\begin{aligned}
&\text{$\bx_a$ the $n_a\times p$ matrix of labeled covariates,}\\
&\text{$\by_a$ and $\bmf_a$ the $n_a$-vectors of labeled responses and predictors,}
\end{aligned}\label{15}
\end{equation}
and likewise $\bx_b$ and $\bmf_b$.
\end{notn}

\section{A PPI model for binary response data}\label{sec2}

A necessary assumption underlying the PPI model \eqref{11}--\eqref{12} is that the \textit{same} probability distribution gives the labeled and unlabeled data, the only difference being the loss of the labels $y(i)$. We assume that the $(x,y)$ pairs are independent of each other, with joint density say
\begin{equation}
p(x,y)=g(x)f(y\mid x);
\label{21}
\end{equation}
for the unlabeled data, $x$ follows density $g(x)$ while $y$ is missing.

\textit{Binary response} refers to data where each $y$ observation is either 1 or 0, in which case $y$ is \textit{Bernoulli}:
\begin{equation}
y\mid x=\begin{cases}1&\text{with probability }\pi(x)\\0&\text{with probability }1-\pi(x).\end{cases}
\label{22}
\end{equation}
The conditional probability that $y=1$ given $x$ is denoted by $\pi(x)$. ``Classification'' is another name for binary response prediction problems.

For compact notation we write
\begin{equation}
\by_a\sim\bern(\bpi_a)
\label{23}
\end{equation}
to indicate
\begin{equation}
y_a(i)\mid x_a(i)\ind\bern(\pi_a(i))
\label{24}
\end{equation}
for $i=1,\dots,n_a$. In what follows, the unknown vector $\bpi_a$ will be specified as a function of a low-dimensional coefficient vector $\beta$, making \eqref{24} a GLM (though not a logistic regression).

Like all regression models, prediction-powered inference can be carried out either \textit{unconditionally} or \textit{conditionally} on the $x$ values, the latter meaning that the covariates $\bx_a$ and $\bx_b$ are considered fixed as observed. Unconditional inference is more relevant when different prediction models are being compared; conditional inference is appropriate when the current data is of central importance going forward.

Philosophical arguments aside, we will see that conditional procedures can provide substantially shorter confidence intervals. This section focuses on conditional computations because they allow familiar GLM results to be brought to bear. Algorithm \R1 (\ref{sec3}) supports both conditional and unconditional analysis.

We begin with $\bl_a$ the vector of logits of the machine learning estimates $f_a(i)$,
\begin{equation}
\bl_a=\left(\cdots\log\{f_a(i)\}/(1-f_a(i))\cdots\right).
\label{25}
\end{equation}
The model
\begin{equation}
\text{model}_a:\glm(\by_a\sim\bl_a,\text{family}=\text{binomial})
\label{26}
\end{equation}
(in R notation) yields coefficient vector $\hbet$,
\begin{equation}
\hbet=\left(\hbet(0),\hbet(1)\right).
\label{27}
\end{equation}
(Models more complicated than \eqref{26} are considered in \ref{sec3}.) Letting $L_a$ be the $n\times2$ matrix having $i$th row $(1,l_a(i))$, the estimated probability vector $\hbpi_a$ from \eqref{26} is then
\begin{equation}
\hbpi_a=\left(\bone+\exp\left\{-L\hbet\right\}\right)^{-1},
\label{28}
\end{equation}
the inverse logit of $L_a\hbet$. See \ref{rem1} at the end of this section.

There is no guarantee that the background data which generated the machine learning predictor $f$ is fully appropriate for the current situation. The GLM procedure \eqref{26} uses the labeled data to provide $\hbpi_a$, a better estimate of the true vector of probabilities $\bpi_a$.

\begin{figure}[htbp]
\centering
\includegraphics{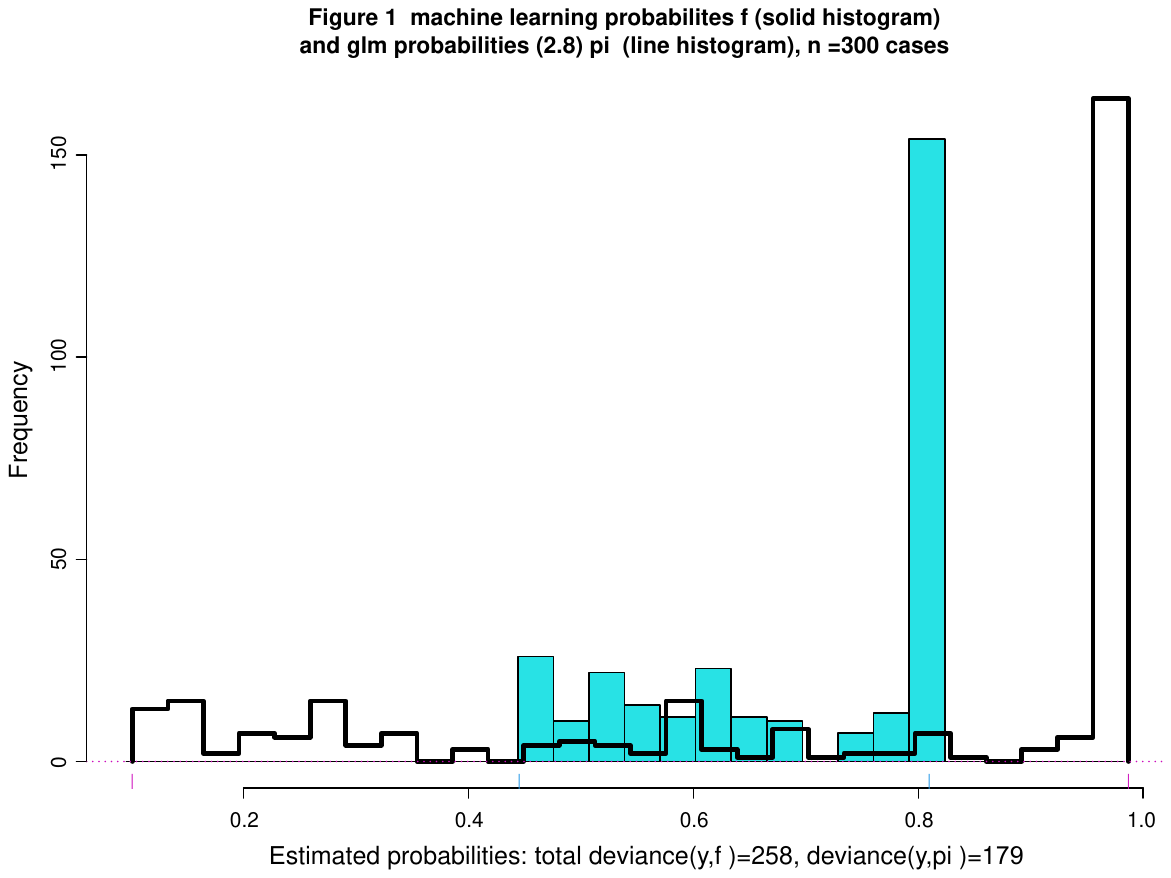}
\caption{Machine learning probabilities $f_a$ (solid histogram) and GLM probabilities \eqref{28} $\pi_a$ (line histogram), $n=300$ cases.}
\label{fig1}
\end{figure}

\ref{fig1} compares the components of $\bmf_a$ with those of $\hbpi_a$ for the labeled dataset ``Pew0'' of an example discussed in \ref{sec3}. Here $n_a=300$. The components of $\hbpi_a$ are more dispersed than those of $\bmf_a$, and give a much better fit to the observed responses $\by_a$: total deviance$(\by_a,\hbpi_a)=179$ compared to deviance$(\by_a,\bmf_a)=258$.

We will consider estimating a class of parameters $\theta$ defined in the following way: beginning with a statistic
\begin{equation}
\hthe=t(\bx,\by),
\label{29}
\end{equation}
the parameter of interest is the expectation of $t(\bx,\by)$ given $\bpi$ and denoted $T(\bpi)$,
\begin{equation}
\theta=T(\bpi)=E_{\bpi}\{t(\bx,\by)\}.
\label{210}
\end{equation}

For instance, if $t(\bx,\by)=\sum y(i)/n$ --- remembering that in this section $x$ is considered fixed --- then
\begin{equation}
\theta=T(\bpi)=\sum(\pi(i))/n,
\label{211}
\end{equation}
which is $E\{y\}$ \eqref{13}. The main example in \ref{sec3} takes
\begin{equation}
t(\bx,\by)=\text{ sample correlation }(\bx,\by),
\label{212}
\end{equation}
which makes $\theta$ multidimensional.

Standard exponential family results, as in Section 3.1 of \cite{2023}, allow us to state PPI in familiar GLM terms. Define $V_a$ to be the diagonal matrix
\begin{equation}
V_a=\diag(v_a(i))\where v_a(i)=\pi_a(i)(1-\pi_a(i)),\ i=1,\dots,n_a.
\label{213}
\end{equation}
The coefficient vector $\hbet$ \eqref{27} has approximate covariance matrix
\begin{align}
\cov\phbet&\doteq G^{-1},\label{214}\\
\intertext{where}
G=L_a'V_aL_a,\label{215}
\end{align}
$L_a=(\bone,\bl_a)$ as before. Letting $\hbh_a=L_a\hbet$ we have $\hbpi_a=\bone/(\bone+\exp\{-\hbh_a\})$, giving
\begin{equation}
d\hbpi_a/d\hbh_a=V_a,
\label{216}
\end{equation}
$\cov(\hbh_a)\doteq L_aG^{-1}L_a'$, and finally
\begin{equation}
\cov\left(\hbh_a\right)\doteq V_aL_aG^{-1}L_a'V_a.
\label{217}
\end{equation}

Now let $\hbdel_a$ be the gradient vector of $\theta=T(\bpi)$ evaluated at $\hbpi_a$,
\begin{equation}
\hbdel_a(i)=\left.\frac{\partial T(\bpi_a)}{\partial\pi_a(i)}\right|_{\hbpi_a}\for i=1,\dots,n_a.
\label{218}
\end{equation}
The delta-method approximate standard deviation of $\hthe_a=T(\hbpi_a)$ is then
\begin{equation}
\sd\left(\hthe_a\right)=\left[\hbdel_a'V_aL_aG^{-1}L_a'V_a\hbdel_a\right]^{1/2}.
\label{219}
\end{equation}

There is a version of \eqref{219} that applies to $\hthe_b$, the estimate of $\theta$ obtained using both the labeled and the unlabeled data. It begins with $\bpi_b$, the vector of probabilities $\pi_b(i)$ for the unlabeled data cases. Letting $\bl_b$ be the vector of logits $l_b(i)=\log\{f_b(i)/(1-f_b(i))\}$, and $L_b$ the matrix $(\bone,\bl_b)$, the estimate $\hbpi_b$ is
\begin{equation}
\hbpi_b=\left(\bone+\exp\left\{-L_\beta\hbet\right\}\right)^{-1},
\label{220}
\end{equation}
with $\hbet$ the \textit{same} as in \eqref{28}.

The estimated standard deviation for $\hthe_b=T(\hbpi_b)$ is the analog of \eqref{219},
\begin{equation}
\sd\left(\hbet_b\right)\doteq\left[\bdel_b'V_bL_bG^{-1}L_b'V_b\bdel_b\right]^{-/1},
\label{221}
\end{equation}
with $G$ as given in \eqref{215}; in \eqref{221},
\begin{equation}
\bV_b=\diag(V_b(i)),\qquad v_b(i)=\hpi_b(i)(1-\hpi_b(i))\text{ for } i-1,\dots,n_b,
\label{222}
\end{equation}
and $\hbdel_b$ the gradient vector of $\hthe_b$,
\begin{equation}
\hdel_b(i)=\frac{\partial T(\hbpi_b)}{\partial\hpi_b(i)}\for i=1,\dots,n_b.
\label{223}
\end{equation}

The approximate confidence interval for $\theta_b=T(\bpi_b)$,
\begin{equation}
\theta_b\in\hthe_b\pm c\sd\left(\hthe_b\right)
\label{224}
\end{equation}
($c=1.96$ for approximate 95\% coverage), has the full weight of maximum likelihood theory behind it. That is, if we start from model$_a$ and ignore the many approximations going into \eqref{224} it is, to paraphrase Einstein, as short as possible but not any shorter. \R1 in \ref{sec3} uses bootstrap methods to avoid the approximations going into \eqref{221}.

\begin{figure}[htbp]
\centering
\includegraphics[keepaspectratio, width=0.6\linewidth, clip]{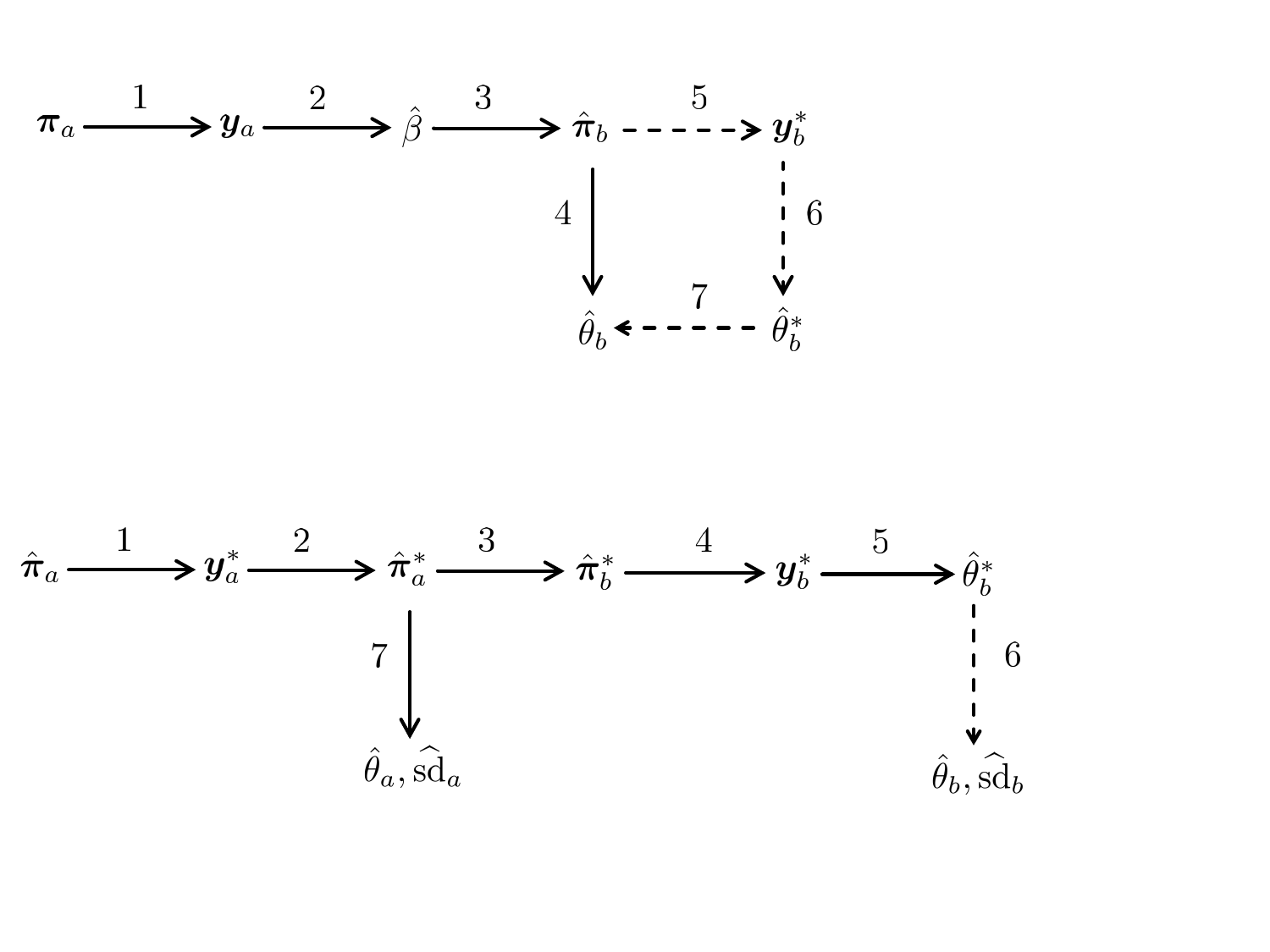}
\caption{Steps in the calculation of $\hbet_b$, as explained in the text, \eqref{220}--\eqref{221}.}
\label{fig2}
\end{figure}

\ref{fig2} diagrams the calculation of $\hthe_b$ in four steps:
\begin{enumerate}
\item Observe $\by_a$, a Bernoulli random sample \eqref{23} from the unknown $\bpi_a$.
\item Calculate $\hbet$ from \eqref{26}--\eqref{27}.
\item Evaluate $\hbpi_b$ \eqref{220}.
\item Calculate $\hthe_b=T(\hbpi_b)$.
\end{enumerate}

\noindent That is, $\hthe_b$ is the MLE of $\theta_b=T(\hbpi_b)$ having observed $\by_a$ and \eqref{221} is its usual delta-method estimated standard deviation.

Unfortunately, the four-step program has a fatal flaw, at least as far as routine applications are concerned: the function $\hthe=T(\hbpi)$ is \textit{not} usually available nor is its gradient $\hbdel$ \eqref{223}, blocking step 4 in \ref{fig1}. (See \ref{sec4}.)

The dashed lines show an alternate path:
\begin{enumerate}[resume]
\item Some large number $B$ of parametric bootstrap replications $\bya_b\sim\bern(\hbpi_b)$ are generated.
\item Each gives a bootstrap replication $\hthea_b=t(\bx_b,\bya_b)$ \eqref{29}.
\item $\hthe_b=\sum_1^B\hthea_b/B$.
\end{enumerate}

\noindent As a bonus, the empirical standard deviation $\hsd_b$ of the $\hthea_b$ values estimates the variability of $\hthe_b$.

The crucial feature of model$_a$ is the use of $\bl_a=\logit(\bmf_a)$ as the structure vector in GLM \eqref{26}. (More elaborate models are available; see \ref{rem7} in \ref{sec3}.) This permits the labeled data to adapt the background prediction rule $f$ to the current situation, as in \ref{fig1}. The Pew Research example of \ref{sec3} supports the use of model$_a$.

\begin{rem}
The GLM fitting process \eqref{26} has a forgiving aspect: suppose that $\bl_a$ in \eqref{25} has been distorted by a hidden linear transformation,
\begin{equation}
\tbl_a=c+d\bl_a.
\label{225}
\end{equation}
Nevertheless, the resulting estimate $\hbpi_a$ \eqref{28} stays the same.
\label{rem1}
\end{rem}

\section{Bootstrap implementation of prediction-powered classification}\label{sec3}

The diagram in \ref{fig2} describes the calculation of $\hthe_b$, our estimate of $\theta$ based on both the labeled and unlabeled data \eqref{11}--\eqref{12}. \R1 is a bootstrap-based program for calculating $\hthe_b$ and its estimated standard deviation $\hsd_b$ when the responses $y(i)$ are binary \eqref{22}. It automates the construction of approximate confidence intervals \eqref{224} from PPI data.

There are three notable advantages to using bootstrap methodology for PPI estimation:
\begin{enumerate}
\item \R1 applies to any statistic $t(\bx,\by)$ \eqref{29}.
\item It avoids asymptotic approximations such as those going into \eqref{219} or \citeauthor{angel}'s formulas.
\item The resulting confidence intervals \eqref{224} are based on familiar GLM theory and as such have, theoretically at least, close to minimal length.
\end{enumerate}

\noindent This section analyzes the logic behind bootstrap PPI confidence intervals. We begin first with a dataset and example featured in what follows.

A Pew Research Center poll from 2020 reported on the approval ($y=1$) or disapproval ($y=0$) of President Biden's COVID-19 messaging. Each respondent answered 10 questions relating to aspects of the pandemic, resulting in a 10-vector $x$ of predictors. A training set of answers from several thousand respondents used the boosting algorithm \texttt{XGboost} to produce a prediction rule $f(x)$; it had an error rate of 15\% on the training data.

A separate collection of 6000 $(x,y,f)$ cases was used to create versions of the labeled and unlabeled datasets \eqref{11}--\eqref{12}: \textit{Pew0} is the main example used in what follows and consists of $n_a=300$ $(x,y)$ pairs and 900 $x$s randomly selected from the 6000 cases; \textit{Pew1:100} comprises 100 additional randomly selected labeled/unlabeled datasets, each with $n_a=300$ and $n_b=900$. Each of the 101 examples has matrix $\bx_a$ 300 by 10, dimension 300-vectors $\by_a$ and $\bmf_a$, matrix $\bx_b$ 900 by 10, and 900-vector $\bmf_b$ (with $\by_b$ taken to be missing).

\begin{table}[htbp]
\caption{Point estimates of $\theta=\corr(x,y)$ for 10 questions in Pew0 data: line 1 from \R1 uses both labeled and unlabeled data as well as machine learning predictions; line 2 from \R1 does not use the unlabeled data; line 3 is the classical estimator $\corr(\bx,\by)$ and doesn't use unlabeled data or machine learning predictions. There are no significant differences among the three sets.}
\begin{center}\begin{small}\begin{tabular}{lcccccrcccc}
                        &     Q1&     Q2&     Q3&     Q4&     Q5&    Q6&     Q7&     Q8&     Q9&   Q10\\\hline
1.\ Lab \& unlab&  $.65$& $-.24$& $-.50$& $-.27$& $-.10$&$-.08$& $-.13$& $-.23$& $-.16$& $-.10$\\
2.\ Lab only&      $.66$& $-.26$& $-.49$& $-.25$& $-.06$&$ .00$& $-.23$& $-.23$& $-.14$& $-.07$\\
3.\ Classical&     $.62$& $-.31$& $-.54$& $-.26$& $-.14$&$ .04$& $-.23$& $-.30$& $-.19$& $-.06$
\end{tabular}\end{small}\end{center}
\label{tab1}
\end{table}

As a first example, described more fully below, \R1 was applied to Pew0 with
\begin{equation}
t(\bx,\by)=\text{correlation}(\bx,\by)
\label{31}
\end{equation}
as the statistic of interest. \ref{tab1} shows the reported estimates of $\theta=E_\pi\{t(\bx,\by)\}$ \eqref{210}, calculated using different portions of Pew0:
\begin{enumerate}
\item The top line uses all of Pew0, both labeled and unlabeled.
\item The middle line uses only the labeled data $\bx_a$, $\by_a$, and $\bmf_a$, ignoring $\bx_b$ and $\bmf_b$.
\item The bottom line --- the ``classical'' estimate --- is the sample correlation $t(\bx,\by)$, now ignoring $\bmf_a$ as well.
\end{enumerate}

\noindent The three estimates are nearly the same: none of the entries for any question come close to a significant difference.

\begin{figure}[htbp]
\centering
\includegraphics{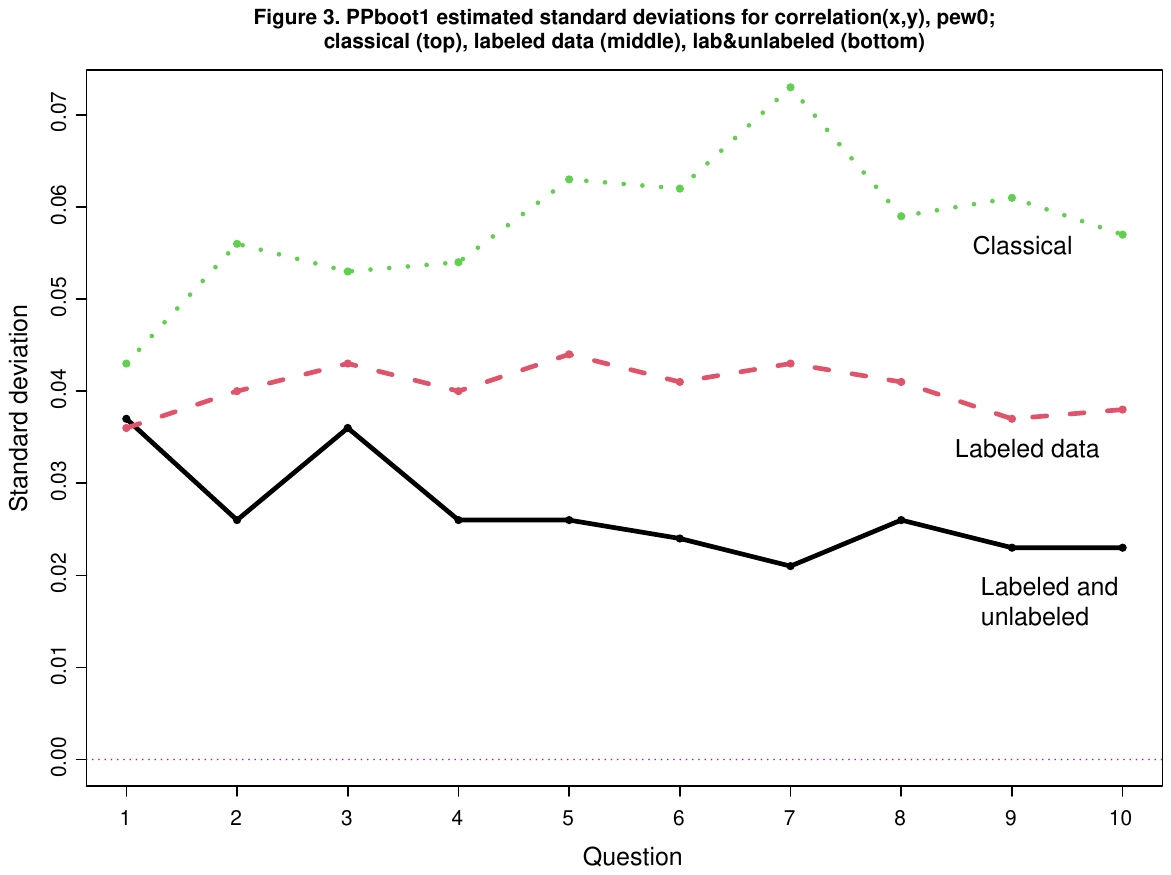}
\caption{\R{1} estimated standard deviations for $\corr(x,y)$, Pew0; classical (top), labeled data (middle), labeled and unlabeled (bottom).}
\label{fig3}
\end{figure}

The width of a confidence interval $\hthe\pm\hsd$ depends on its standard deviation $\hsd$. \ref{fig3} compares the estimated standard deviations for the three techniques in \ref{tab1}, labeled and unlabeled, labeled only, and classical. Summarizing results for the 10 questions, labeled and unlabeled standard deviations are about half as large as the classical values and about two-thirds as large as those using labeled only.

How good is the performance seen in \ref{fig3}? If we actually had the labels for the 900 unlabeled cases it would make 1200 $(x,y)$ pairs in all, so $t(\bx,\by)$ using all the data would give about half the standard deviation of the classical values shown in \ref{fig1} --- roughly matching the result seen in \ref{fig3} --- which looks like an excellent performance by \R1. More realistically, given the way \R1 will be shown to operate, we might hope for a reduction factor of $\sqrt{300/900}=0.58$ between labeled only and labeled plus unlabeled, which is not strikingly worse than the observed two-thirds. Somewhat sobering, however, is the behavior on question 1 where the labeled-and-unlabeled standard deviation actually exceeds the labeled-only sd. \ref{sec4} has more to say about the efficiency of PPI methodology.

The purpose of \R1 is to provide the point estimate $\hthe_b$ diagrammed in \ref{fig2}, as well as its standard error. This task can be carried out either conditionally (on the $x$ values) or unconditionally. The explanation next assumes conditional inference, with unconditional analysis discussed in \ref{rem4} at the end of this section.

\begin{figure}[htbp]
\centering
\includegraphics[keepaspectratio, width=0.8\linewidth, clip]{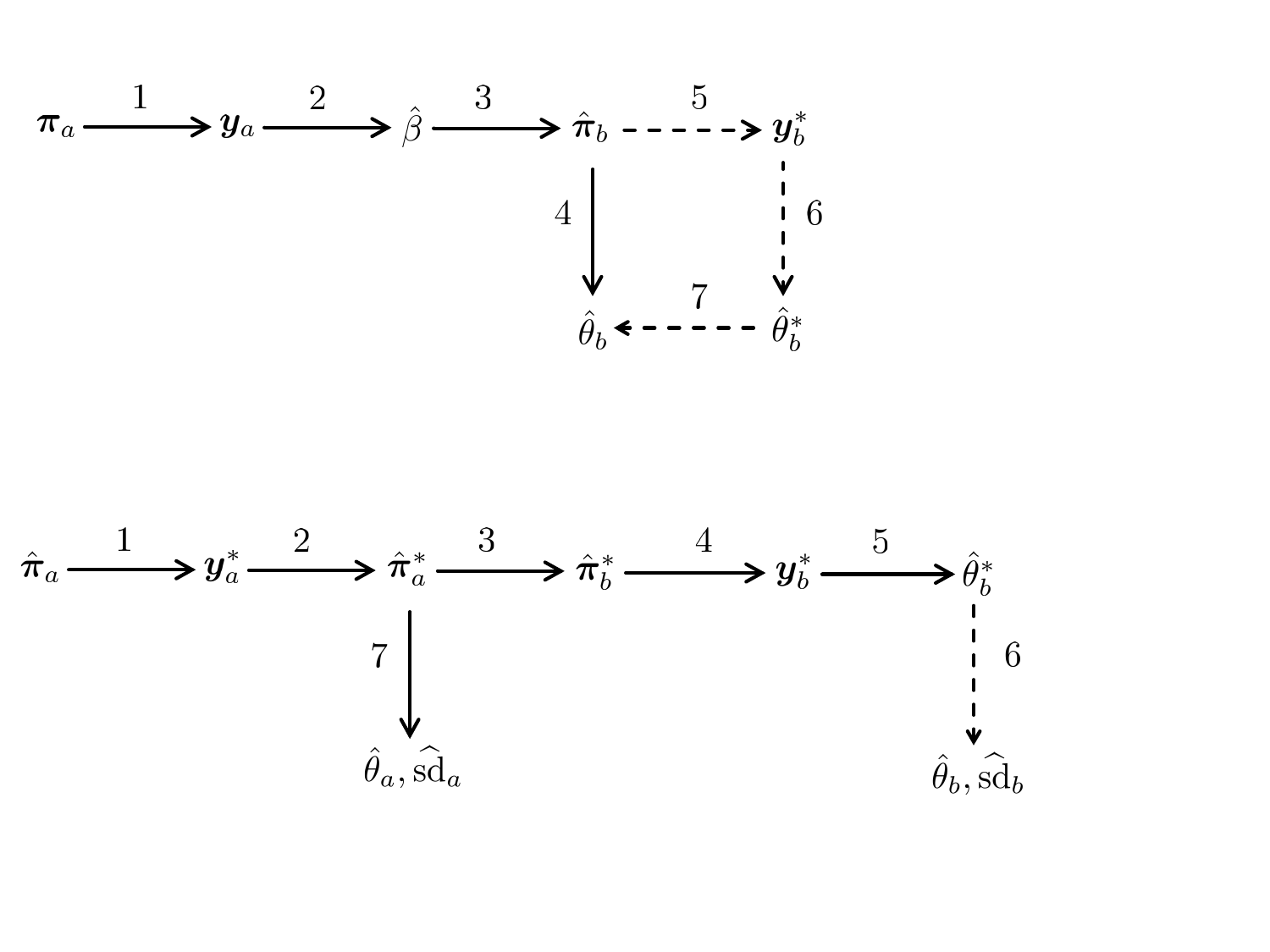}
\caption{Top row shows steps in the calculation of a single bootstrap replication $\hthea_b$, as explained in the text. Multiple replications then give estimate $\hthe_b$ and standard deviation $\hsd_b$ (step 6), as well as corresponding values $\hthe_a$ and $\hsd_a$, which ignore the unlabeled data (step 7).}
\label{fig4}
\end{figure}

The top line of \ref{fig4} diagrams the computation of a single bootstrap replication $\hthea_b$ beginning with $\hbpi_a$ \eqref{28}, the estimate of $\pi$ obtained from model$_a$ \eqref{26} (asterisks indicate bootstrap quantities):
\begin{enumerate}
\item Sample $\bya_a\sim\bern(\hbpi_a)$.
\item Calculate $\hbpia_a$ as in \eqref{26}--\eqref{28} with $\bya_a$ replacing $\by_a$.
\item Select $\hbpia_b$ such that the graph of $(\bmf_b,\hbpia_b)$ matches that of $(\bmf_a,\hbpia_a)$. (See \ref{rem5}.)
\item $\bya_b\sim\bern(\hbpia_b)$.
\item $\hthea_b=t(\bx_b,\bya_b)$.
\end{enumerate}

\noindent The quantity $\hthea_b$ is a single bootstrap replication of the estimate $\hthe_b$ in \ref{fig2}. Creating some large number $B$ of such replicates ($B=1000$ by default) provides an estimate and standard deviation for the parameter of interest $\theta$, step 6,
\begin{subequations}\begin{align}
\hthe_b&=\sum_{j=1}^B\frac{\hthea_b(j)}{B}\label{33a}\\
\intertext{and}
\hsd_b&=\left[\sum_{j=1}^B\frac{\left(\hthea_b(j)-\hthe_b\right)^2}{B-1}\right]^{1/2};\label{33b}
\end{align}\label{33}
\end{subequations}
these are the numbers reported in \ref{tab1} and \ref{fig3}.

In any one application, two factors determine the performance of prediction-powered inference: the efficacy of $\bmf$ as an estimate of $\bpi$, and the information for estimating $\theta$ available in the unlabeled cases. Step 7 in \ref{fig4} helps separate the effects of these two factors: changing subscript $b$ to $a$ in steps 4--6 gives an estimate $\hthea_a$ that depends only on the labeled data $\bx_a$, $\by_a$, and $\bmf_a$. The ``labeled data'' line in \ref{fig3} shows that supplementing $\bx_a$ and $\by_a$ with $\bmf_a$ provides about half the improvement over ``classical'', with the other half coming from using $\bx_b$ and $\bmf_b$. \ref{sec4} investigates the two factors for the case $\theta=E\{y\}$.

\begin{figure}[htbp]
\centering
\includegraphics{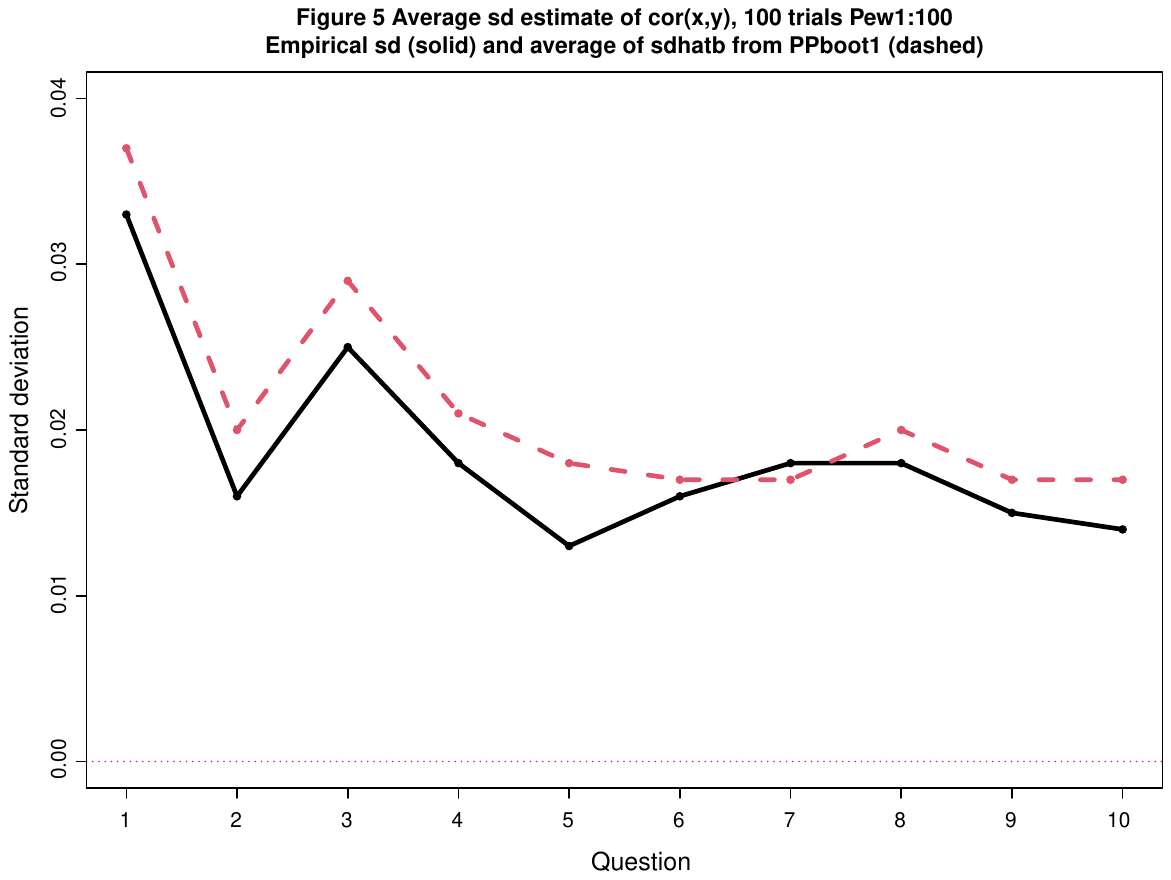}
\caption{Standard deviation estimates for $\corr(x,y)$, 100 trials Pew1:100; empirical sd (solid) and average of $\hsd_b$ from \R1 (dashed).}
\label{fig5}
\end{figure}

As a check on the performance of the estimated sd for $\theta=\corr(\bx,\by)$, \R1 was run for the 100 randomly selected datasets Pew1:100 described at the start of this section. Each dataset gave a $\hthe_b$ and $\hsd_b$. \ref{fig5} compares the ``external variability'' $S_{\extr}=$ empirical standard deviation of the 100 $\hthe_b$ values with the ``internal variability'' $S_{\intr}=$ mean of the 100 $\hsd_b$ values. $S_{\intr}$ tracks $S_{\extr}$ reasonably well, averaging about 15\% greater, which suggests that confidence intervals $\hthe_b\pm c\times\hsd_b$ may be mildly conservative. \ref{rem6} says more about this.

The PPI algorithm in \cite{angel} operates quite differently from \R1. A single example follows: Suppose that the statistic of interest $\theta=t(\bx,\by)$ is the vector of logistic regression coefficient of $y$ on $x$, calculated as
\begin{equation}
t(\bx,\by)=\glm(\by\sim\bx,\text{binomial})\$\text{coef}
\label{34}
\end{equation}
in R. We will ignore the intercept coefficient, making $\theta$ 10-dimensional for the Pew Research Center data. Notice we are not assuming that the logistic regression model in \eqref{34} is correct.

Following the formulation in \cite{angelx}, for a trial value $\hthe$ of $\theta$ we define the loss function
\begin{equation}
L\left(\tthe\right)=D\left(\by,\bx_a\cdot\tthe\right)+\frac12\left[D\left(\bmf_b,\bx_b\cdot\tthe\right)-D\left(\bmf_a,\bx_a\cdot\tthe\right)\right],
\label{35}
\end{equation}
where $D(\by,\bz)$ is the average Bernoulli deviance between $\by$ and $\bz$. (See \ref{rem6}.) The PPI estimate of \cite{angel} is
\begin{equation}
\htheang=\argmin\left\{L\left(\tthe\right)\right\},
\label{36}
\end{equation}
the value of $\tthe$ minimizing \eqref{35}. ($\htheang$ corresponds to $\hthe_b$ \eqref{33}.) The authors show that $\hthe$ is asymptotically unbiased, with an estimated standard deviation based on large-sample calculations.

\begin{table}[htbp]
\caption{Average point estimates of logistic regression coefficients for the 100 datasets Pew1:100. \cite{angel} and \R1 estimates are nearly the same.}
\begin{center}\begin{small}\begin{tabular}{lccrcrccccc}
            &    Q1&     Q2&      Q3&     Q4&    Q5&   Q6&     Q7&     Q8&     Q9&    Q10\\\hline
Angelopoulos&$3.73$& $-.23$& $- .87$& $-.07$&$-.17$&$.16$& $-.41$& $-.24$& $-.41$& $-.35$\\
\R1         &$3.43$& $-.12$& $-1.02$& $-.01$&$ .14$&$.01$& $-.19$& $-.22$& $-.37$& $-.28$
\end{tabular}\end{small}\end{center}
\label{tab2}
\end{table}

The estimates $\htheang$ \eqref{36} and $\hthe_b$ \eqref{33} for the logistic regression coefficients \eqref{34} were calculated for the 100 datasets Pew1:100. \ref{tab2} shows that the averages of the two algorithms were nearly the same.

\begin{figure}[htbp]
\centering
\includegraphics{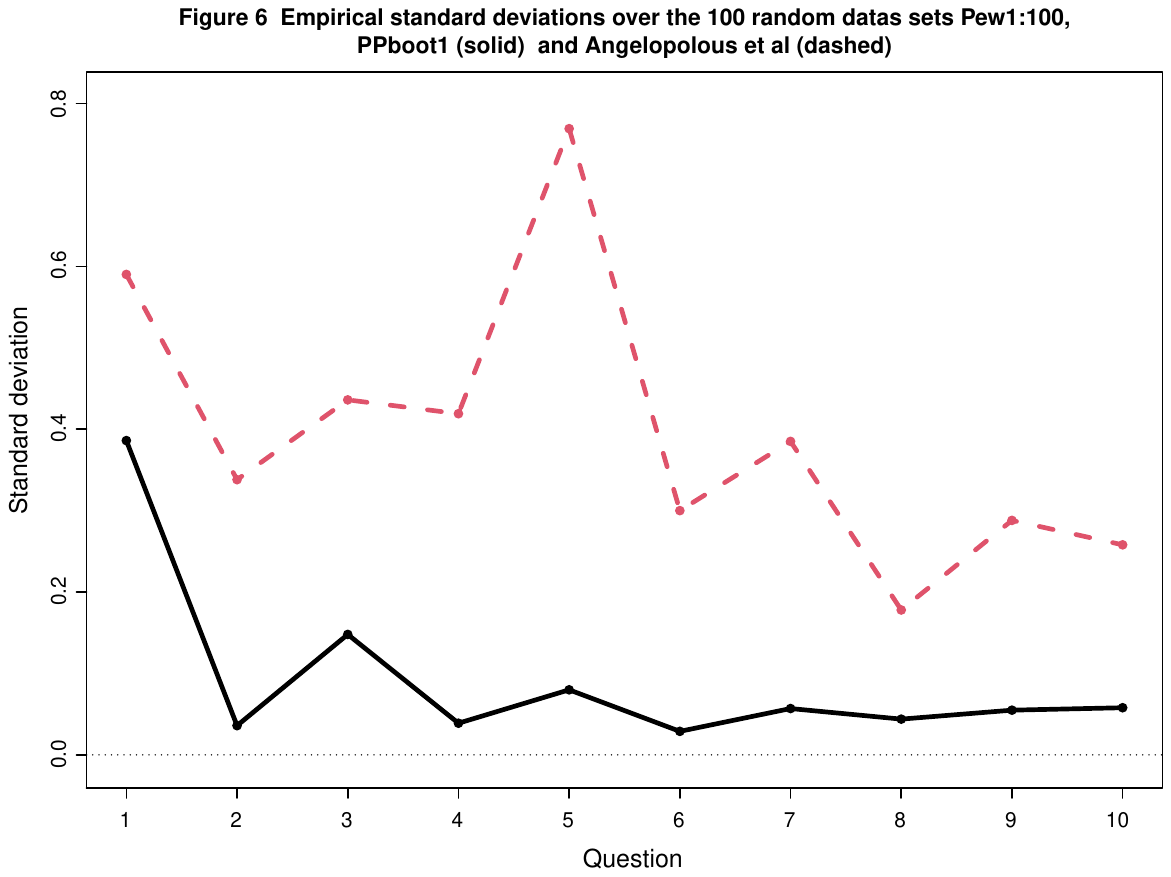}
\caption{Empirical standard deviations over the 100 random datasets Pew1:100; \R1 (solid) and \cite{angel} (dashed).}
\label{fig6}
\end{figure}

\ref{fig6} graphs the empirical standard deviations of the two pairs of 100 estimates for each of the 10 questions. \R1 was much more accurate in this case, with sd values about 20\% as large overall. (The \citeauthor{angel} algorithm performs better in the non-binary example of \ref{sec5}.)

\cite{angel} use the machine-learning predictions $\bmf_a$ and $\bmf_b$ quite differently from \R1: $\bmf_b$ substitutes for the missing response vector $\by_b$ in \eqref{35}, compared with \eqref{25}--\eqref{26} where $\bmf_a$ plays the role of a structural vector (like $x$ in a simple $(x,y)$ regression). Their approach has the advantage of being nonparametric and assumption-free on the role of $f$, but may be inefficient, as in \ref{fig6}.

\begin{rem}
Looking back at \ref{fig4} in \ref{sec3}, suppose the function $\theta=T(\bpi)$ was easily available. This would greatly simplify the bootstrap algorithm in \ref{fig4}: after step 3 we could directly calculate $\hthea_b=T(\hbpia_b)$; $B$ such replications would yield $\hthe_b$ and $\hsd_b$ as in \eqref{33}, which is to say that we could carry out a standard parametric bootstrap analysis of PPI estimation.
\label{rem2}
\end{rem}

\begin{rem}
Usually the function $T\pdot$ isn't available, necessitating steps 4 through 6 in \ref{fig4}. There are \textit{two} simulations in the top line of \ref{fig4}, after steps 1 and 4, mimicking steps 1 and 5 in \ref{fig2}. It might seem better to simulate \textit{many} replications $\hthea_b$ in steps 4 and 5 of \ref{fig4}, instead of just one, taking their average as a better version of $\hthe_b$ for use in \eqref{33}. However, besides raising the computational burden, this would invalidate the standard deviation calculation in \eqref{33}: we originally get to observe only a \textit{single} vector $\by_a$ at the beginning of \ref{fig2}, not many of them; the averaged versions of $\hthea_b$ suggested above would be misleadingly accurate as far as inference is concerned.
\label{rem3}
\end{rem}

\begin{figure}[htbp]
\centering
\includegraphics{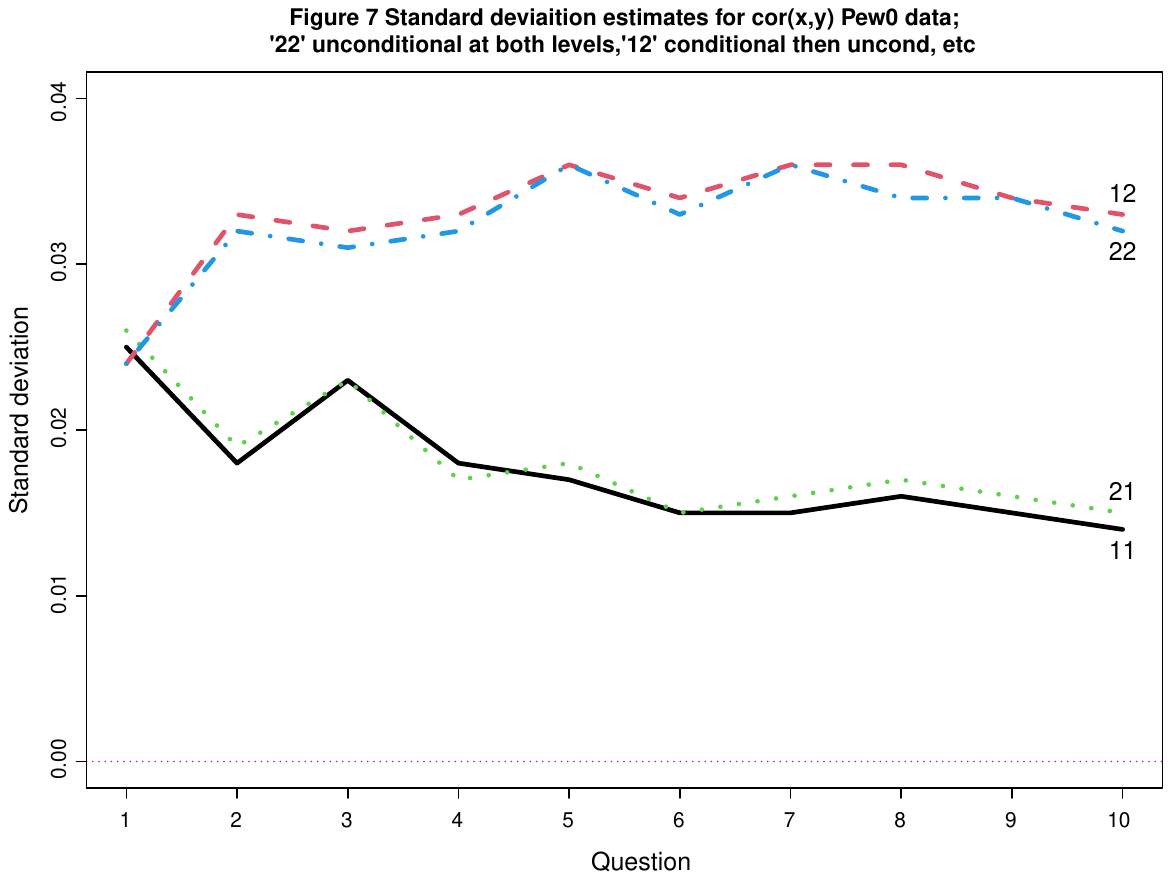}
\caption{Standard deviation estimates for $\corr(x,y)$, Pew0 data; ``22'' unconditional at both levels, ``12'' conditional then unconditional, etc.}
\label{fig7}
\end{figure}

\begin{rem}
The calculations for the preceding tables and figures were carried out conditionally, that is, with $\bx_a$ and $\bx_b$ considered fixed as observed. \R1 also allows unconditional analysis: letting
\begin{equation}
z_a(i)=(x_a(i),y_a(i),f_a(i))\for i=1,\dots,n,
\label{37}
\end{equation}
step 1 in \ref{fig4} is replaced by $\hbpi_a\rightarrow\bza_a$, where $\bza_a$ indicates a random sample with replacement from $\{z_a(1),z_a(2),\dots,z_a(n)\}$. This makes the first step in \ref{fig4} unconditional rather than conditional. A similar construction at step 3 makes the second bootstrap calculation unconditional.

\R1 was applied to the Pew0 data with $t(\bx,\by)=\corr(\bx,\by)$. All four possibilities of conditional or unconditional were tried in \ref{fig7}, coded ``12'' for ``step1 conditional and step 3 unconditional'' and likewise ``11'', ``21'', and ``22''. The means of the point estimates were indistinguishable from each other. However, \ref{fig7} shows much larger standard deviations if step 3 is carried out unconditionally; the choice of unconditional or conditional at step 1 made little difference.
\label{rem4}
\end{rem}

\begin{rem}
Formulas \eqref{28} and \eqref{220} in \ref{sec2} guarantee that the graph of $\hpi_b$ versus $f_b$ exactly matches the graph of $\hpi_a$ versus $f_a$. This reflects the assumption that for case $i$, $\Pr\{y(i)=1\mid x(i)\}$ is a function only of its predicted value $f(i)$, irrespective of being in group \eqref{11} or \eqref{12}. At step 3 in \ref{fig4}, \R1 carries out the matching numerically: in R notation,
\begin{equation}
\hbpia_b=\text{approx}(\bmf_a,\hbpia_a,\bmf_b)
\label{38}
\end{equation}
\label{rem5}
\end{rem}

\begin{rem}
The Bernoulli deviance between a single binary response $y$ and parameter value $\pi$ is
\begin{equation}
-2\left[y\log\{\pi\}+(1-y)\log\{1-\pi\}\right].
\label{39}
\end{equation}
Maximizing the average of \eqref{39} over the labeled data is equivalent to maximizing log likelihood, as in equation (5) of \cite{angel}. The authors consider using values other than 0.5 in \eqref{35} as a further optimization step.
\label{rem6}
\end{rem}

\begin{table}[htbp]
\caption{Total binomial deviance for increasing choices of $p_0$ \eqref{310} and $\aic=\dev+2p_0$. (``0'' is $\dev(\by_a,\bmf_a)$.). No choice significantly improves upon $p_0=1$.}
\begin{center}\begin{small}\begin{tabular}{lccccc}
$p_0$&  0&    1&    2&    3&    4\\\hline
Dev&  258&  179&  177&  172&  170\\
AIC&  258&  181&  181&  178&  178
\end{tabular}\end{small}\end{center}
\label{tab3}
\end{table}

\begin{rem}
Model \eqref{26} assumes that the logit of $\pi_a$ is a linear logistic transform of $l_a=\logit\{f_a\}$. \R1 allows broader models where $\logit\{\pi_a\}$ is some linear function of powers of $l_a$,
\begin{equation}
\glm(\by_a\sim\text{poly}(\bl_a,p_0))
\label{310}
\end{equation}
in R notation, with $p_0$ the top power. In all the preceding examples, $p_0=1$, i.e., \eqref{26}. \ref{tab3} shows the total binomial deviance between $\by_a$ and $\hbpi_a$ for increasing choices of $p_0$, along with AIC $=$ deviance $+2p_0$. No choice is a significant improvement on $p_0=1$.
\label{rem7}
\end{rem}

\begin{table}[htbp]
\caption{External/internal ratios $r$ \eqref{313} and endpoints of approximate 95\% confidence intervals. Data Pew1:100. The estimates for level $b$ are mostly below 1.00, while those for level $a$ mostly vary around 1.00.}
\begin{center}\begin{small}\begin{tabular}{lrrrrrrrrrr}
Level $b$:\\
     &  Q1&   Q2&  Q3& Q4&  Q5&   Q6&   Q7&   Q8&   Q9& Q10\\\hline
$r$&   .88&  .91& .76& .8& .84&  .93&  .98& 1.03&  .93& .82\\
lower& .75&  .78& .66& .7& .74&  .79&  .86&  .88&  .82& .71\\
upper&1.01& 1.04& .84& .9& .93& 1.07& 1.10& 1.19& 1.04& .93\\
\\     
Level $a$:\\
     &   Q1&   Q2&   Q3&   Q4&   Q5&   Q6&   Q7&   Q8&   Q9&  Q10\\\hline
$r$&   1.01&  .99&  .97& 1.03&  .90& 1.01& 1.27& 1.15& 1.06&  .96\\
lower&  .86&  .87&  .85&  .88&  .79&  .88& 1.09& 1.00&  .92&  .82\\
upper& 1.18& 1.11& 1.10& 1.17& 1.01& 1.14& 1.46& 1.29& 1.19& 1.10
\end{tabular}\end{small}\end{center}
\label{tab4}
\end{table}

\begin{rem}
Suppose $\hthe_j$ are independent estimates with means and variances
\begin{equation}
\hthe_j\sim(\mu_j,\sigma_j^2)\for j=1,\dots,N.
\label{311}
\end{equation}
Define the external standard deviation to be
\begin{equation}
S_{\extr}=\left[\frac{E\sum_{j=1}^N\left(\hthe_j-\hthe_\cdot\right)^2}{N-1}\right]^{1/2},\qquad\hthe_\cdot=\frac{\sum_{j=1}^N\hthe_j}{N},
\label{312}
\end{equation}
and the internal standard deviation to be
\begin{equation}
S_{\intr}=\left[\frac{\sum_1^N\sigma_j^2}{N}\right]^{1/2}.
\label{313}
\end{equation}
Then it is straightforward to derive their ratio $r$ as
\begin{equation}
r=\frac{S_{\extr}}{S_{\intr}}=1+\left[\frac{\sum_{j=1}^N(\mu_j-\barmu)^2/(N-1)}{S_{\intr}^2}\right]^{1/2},
\label{314}
\end{equation}
with $\barmu=\sum_{j=1}^N\mu_j/N$.

The top half of \ref{tab4} shows $r$ for $\hthe_b$ of \ref{fig4}, applied to the $N=100$ datasets of Pew1:100. The estimates for the 10 questions tend to have $r$ \text{less than} 1.00, as graphed in \ref{fig5}. This looks impossible according to \eqref{314}, but can happen if the bootstrap estimates $\hsig_{bj}^2$ are larger than the true standard deviations $\sigma_j^2$.

As a check, the bottom half of \ref{tab4} shows $r$ for $\hthe_a$, the level $a$ estimates in \ref{fig4}. Now $r$ varies, mostly, around $r=1.00$. All of this suggests:
\begin{itemize}
\item The bootstrap estimates $\hsig_{aj}$ are close to the true standard deviations $\sigma_a$.
\item The estimates $\hsig_{bj}$ are larger, by perhaps some 15\%.
\item The true means $\mu_j$ are nearly the same for the 100 datasets.
\end{itemize}
\label{rem8}
\end{rem}

\section{The estimation of $\theta=E\{y\}$}\label{sec4}

PPI literature has taken a special interest in the estimation of the parameter $\theta=E\{y\}$, which for binary response is the probability of obtaining $y=1$. Part of the reason is mathematical tractability, another scientific priority. In the Galaxy Zoo 2 example mentioned in \ref{sec1}, $\theta$ is the proportion of galaxies that are spirals, a parameter of theoretical interest. Applying \R1 to $\theta=E\{y\}$ yields somewhat surprising results, discussed next. As before, that discussion will be in conditional terms with $\bx_a$ and $\bx_b$ considered fixed.

The parameter $T(\bpi)=E_{\bpi}\{\by\}$ is unusually amenable in having an explicit form for the function $T$,
\begin{equation}
T(\bpi)=\frac1{n}\sum_1^n\pi(i).
\label{41}
\end{equation}
This allows us to short-circuit the bootstrap diagram in \ref{fig4}: after step 3 we can calculate $\hthe_b$ directly,
\begin{equation}
\hthea_b=\frac1{n_b}\sum_1^{n_b}\tpi_b^*(i).
\label{42}
\end{equation}
This greatly simplifies the bootstrap analysis: drawing a large number of bootstrap replications \eqref{42} gives $\hthe_b$ and $\hsd_b$, as in \eqref{33}, or likewise $\hthe_a$ and $\hsd_a$ if we use only the labeled data $\bx_a$, $\by_a$, $\bmf_a$.

\begin{table}[htbp]
\caption{Estimates and standard deviations for $\theta=E\{y\}$, Pew0 data; ``a'' uses only labeled data including $\bmf_a$, ``b'' also uses unlabeled data. ``Direct'' uses \eqref{41}--\eqref{42}; \R1 uses the full algorithm in \ref{fig4}. ``Classical'' is the mean and standard deviation estimate \eqref{43b}.}
\begin{center}\begin{small}\begin{tabular}{lccccc}
       &\multicolumn{2}{c}{Direct}&\multicolumn{2}{c}{\R1}&Classical\\
       & $a$&  $b$&  $a$&  $b$&     \\\hline
$\hthe$&.730& .735& .730& .735& .730\\
Sd     &.018& .018& .017& .020& .026
\end{tabular}\end{small}\end{center}
\label{tab5}
\end{table}

\ref{tab5} reports on five estimates of $\theta=E\{y\}$ for the Pew0 data of \ref{sec3}, $n_a=300$ and $n_b=900$.  The Classical estimate and standard deviation estimate, based only on $\by_a$, is
\begin{subequations}\begin{align}
\hthe&=\frac1{n}\sum y_a(i)\label{43a}\\
\intertext{and}
\hsd&=\left[\sum_1^{n_a}\frac{\left(y_a(i)-\hthe\right)^2}{n_a-1}\right]^{1/2}\label{43b}
\end{align}\label{43x}
\end{subequations}
``Direct'' is based on $B=1000$ bootstrap replications of $\hthea_a=T(\hbpia_a)$ and of $\hthea_b=T(\hbpia_b)$ as obtained at steps 2 and 3 of \ref{fig4}. ``\R1'' is from \R1 with $B=1000$.

The five point estimates are nearly identical, all close to the mean 0.730 of $y$, which implies low bias for the bootstrap procedures. As far as accuracy is concerned, the Classical estimate is more than 40\% worse than Direct. The \R1 sd is somewhat greater than Direct at level $b$ but still a considerable improvement over Classical.

The surprise here is that using the unlabeled data does \textit{not} improve the estimation of $\theta=E\{y\}$. It is true, as emphasized in the literature, that $\hthe_b$ is less variable than the classical estimator mean$(\by_a)$ but it isn't better than $\hthe_a$, the estimate based just on $\bx_a$, $\by_a$, and $\bmf_a$. To put it another way, there is useful information about $\theta$ in $\bmf_a$ but not in $\bmf_b$.

The example in \ref{tab5} holds generally: for model$_a$ \eqref{26}, \textit{prediction-powered inference does not improve the estimation of} $\theta=E\{y\}$. A supporting argument follows next, based on \eqref{219} and \eqref{221}. Note: the fact that the classical sd estimate 0.026 is a factor of about $\sqrt2$ greater than the direct sd estimate 0.018 is not a coincidence; see \ref{rem10} at the end of this section.

Formula \eqref{219} can be written as
\begin{equation}
\sd\left(\hthe_a\right)=M_aG^{-1}M_a'\qquad\left(M_a=\hbdel_a'V_aL_a\right),
\label{43}
\end{equation}
$L_a=(\bone,\bl_a)$ as before. For $\theta=E\{y\}$, the gradient vector \eqref{218} is
\begin{equation}
\hbdel_a=\left(\frac{\bone_a}{n_a}\right)\qquad(\bone_a\text{ the vector of $n_a$ 1s}),
\label{44}
\end{equation}
giving
\begin{equation}
M_a=\frac1{n_a}\left[\sum_{i=1}^{n_a}\pi_a(i)(1-\pi_a(i)),\sum_{i=1}^{n_a}\pi_a(i)(1-\pi_a(i))\logit\pi_a(i)\right].
\label{45}
\end{equation}
Similarly from \eqref{221},
\begin{equation}
\sd\left(\hthe_b\right)\doteq M_bG^{-1}M_b',
\label{46}
\end{equation}
where
\begin{equation}
M_b=\frac1{n_b}\left[\sum_{i=1}^{n_b}\pi_b(i)(1-\pi_b(i)),\sum_{i=1}^{n_b}\pi_b(i)(1-\pi_b(i))\logit\pi_b(i)\right].
\label{47}
\end{equation}

Defining the functions
\begin{subequations}\begin{align}
\gamma(\pi)=\pi(1-\pi)\label{48a}\\
\intertext{and}
\delta(\pi)=\pi(1-\pi)\logit\{\pi\},\label{48b}
\end{align}\label{48}
\end{subequations}
$M_a$ \eqref{45} can be written as
\begin{equation}
M_a=\left(\hate_a\{\gamma(\pi)\},\hate_a\{\delta(\pi)\}\right),
\label{49}
\end{equation}
where $\hate_a$ indicates expectation with respect to the distribution putting probability $1/n_a$ on each point $\pi_a(i)$. Likewise,
\begin{equation}
M_b=\left(\hate_b\{\gamma(\pi)\},\hate_b\{\delta(\pi)\}\right),
\label{410}
\end{equation}
$\hate_b$ the expectation with probability on each point $\pi_b(i)$.

\ref{rem5} of \ref{sec3} says that the distribution of $\hpi_b$ matches that of $\hpi_a$, making $M_b=M_a$; so that $\sd(\hthe_a)\doteq\sd(\hthe_b)$, as claimed.

\begin{table}[htbp]
\caption{\R1 applied to the standard deviation, skewness, and kurtosis of $\by$, Pew0 data. For all three statistics, prediction-powered variability $\hsd_b$ exceeds $\hsd_a$ (that based on only labeled data), although $\hsd_b$ is less than the nonparametric bootstrap standard deviation based on $\by_a$ (``classical'').}
\begin{center}\begin{small}\begin{tabular}{lccc}
    &$\sd_a$&$\sd_b$&Classical\\\hline
Sd  &.009&.011&.013\\
Skew&.097&.119&.148\\
Kurt&.202&.259&.314 
\end{tabular}\end{small}\end{center}
\label{tab6}
\end{table}

\ref{tab6} concerns the application of \R1 to three other statistics that depend only on $\by$: the standard deviation, skewness, and kurtosis of the $n$ values $y(i)$. Once again the PPI estimate $\hthe_b$ is seen to be more variable than $\hthe_a$ (though $\hsd_b$ \textit{is} less than the ``classical'' nonparametric bootstrap standard deviation estimate based on $\by_a$). A reasonable conjecture is that $\hthe_b$ is never more accurate than $\hthe_a$ when $t(\bx,\by)$ depends only on $\by$. See the discussion in \ref{rem11} and also that in \ref{sec5}.

The results in \ref{fig3} are more encouraging, demonstrating considerable efficacy for PPI estimation of correlation$(\bx,\by)$ for the Pew0 data. There, $n_a=300$ and $n_b=900$. \ref{fig8} shows that increasing $n_b$ to 1800 and then 3600 gives still better results.

\begin{figure}[htbp]
\centering
\includegraphics{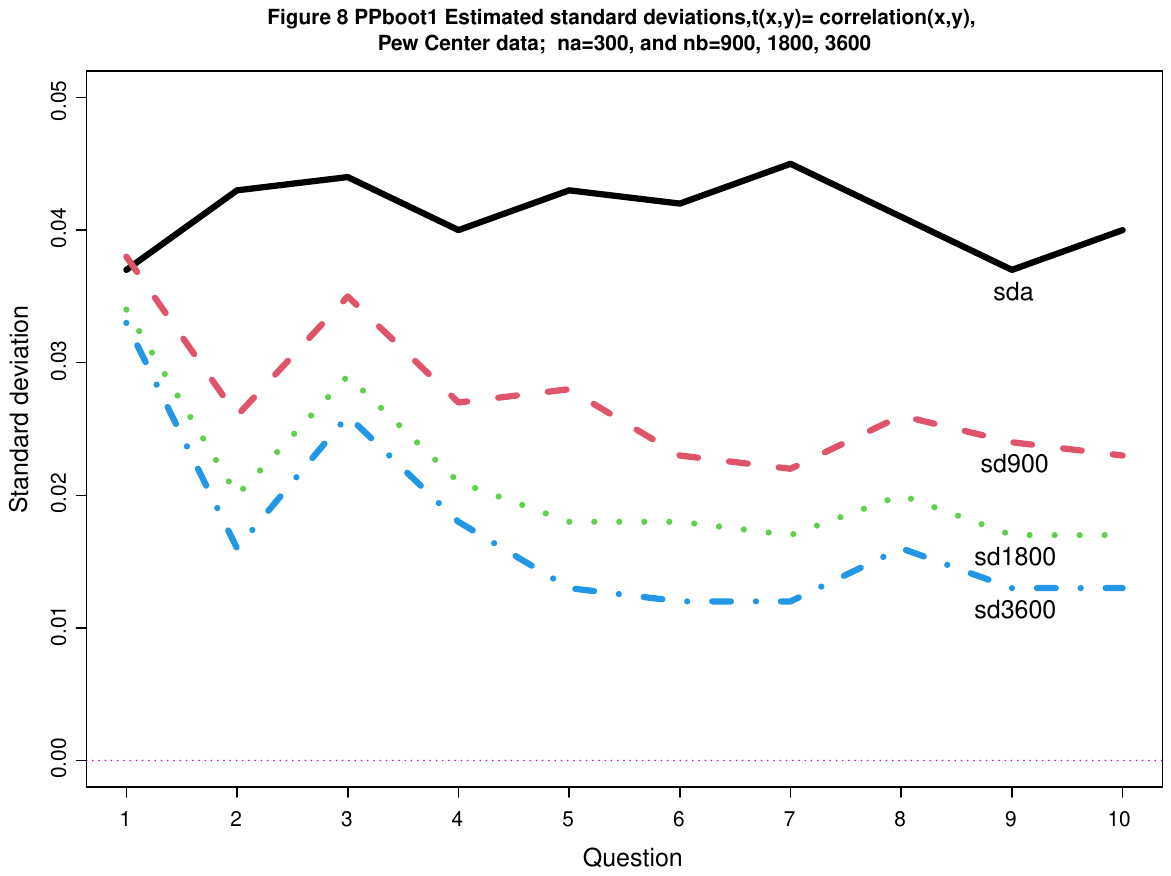}
\caption{\R{1} estimated standard deviations, $t(x,y)=\corr(x,y)$, Pew Center data; $n_a=300$ and $n_b=900$, 1800, 3600.}
\label{fig8}
\end{figure}

Looking at \ref{tab6} and \ref{fig8} raises an intriguing theoretical question: given a particular dataset, statistic $t(\bx,\by)$, $n_a$, and $n_b$, how much can we expect to gain from prediction-powered inference? At the opposite extreme from parameters like $E\{y\}$ which only depend on $\by$, suppose $t(\bx,\by)$ is a function of $\bx$ alone. Then $\hsd_b$ will decline as $n_b^{-1/2}$. (In fact, sd3600 in \ref{fig8} \textit{is} about half of sd900, but $\corr(\bx,\by)$ is not necessarily a typical case.) The expressions \eqref{219} and \eqref{221} for $\hsd_a$ and $\hsd_b$ seem like they might answer the intriguing question. But, as before, this is misleading: in most cases, the gradients \eqref{218} and \eqref{223} are not readily available. See \ref{rem10} at the end of this section.

There is some good news in \ref{tab4}: when an effective prediction rule $f$ is available, prediction-powered methods can aid in analyzing the labeled data even in the absence of unlabeled data. A small example follows.

\begin{table}[htbp]
\caption{First 5 of 130 subjects in sick babies dataset. Six covariates used to predict death ($y=1$) or survival ($y=0$): gestational age, body weight index, respiratory efficiency, breathing measure, mental fitness, heart rate (all in standardized units).}
\begin{center}\begin{small}\begin{tabular}{rrrrrrc}
   Gest&  Bwei&   Resp&  Cpap&  Ment&   Rate&$y$\\\hline 
$  .91$&$ .87$&$  .56$&$-.71$&$1.45$& $-.44$& $0$\\
$-1.02$&$-.46$&$ 1.41$&$1.40$&$1.45$&$-1.87$& $1$\\
$ 1.15$&$ .87$&$ 1.41$&$1.40$&$1.45$&$-1.87$& $1$\\
$  .18$&$-.46$&$-1.13$&$-.71$&$-.69$& $-.44$& $1$\\
$  .18$&$-.46$&$-1.13$&$-.71$&$-.69$& $-.44$& $0$
\end{tabular}\end{small}\end{center}
\label{tab7}
\end{table}

\ref{tab7} shows data for the first five of 130 subjects in a study of very sick babies at an African hospital. Each child either died within a few weeks after arrival ($y=1$) or survived ($y=0$); 52\% of the babies died. Six baseline covariates are available for predicting $y$. A larger study was carried out in the previous year, involving 812 babies measured on the same six covariates. However, looser criteria admitted greater numbers of healthy babies to the study, and this resulted in a lower 26\% death rate. Can we use the first study to help with estimation in the second one?

Consider the first study as ``background data'' and the small second study as the ``labeled data''. A standard logistic regression of the background dataset yielded a vector of regression coefficients (length 7 including intercept); these were applied to the labeled set to obtain an $\bmf_a$. We now have $\bx_a$, $\by_a$, and $\bmf_a$, and can carry out the part of \R1 that applies to the labeled data.

This was performed for $\theta=E\{y\}$ with these results:
\begin{center}\begin{small}\begin{tabular}{rcc}
&\R1&Classical\\\hline
$\hthe$&.522&.523\\
$\hsd$&.034&.044
\end{tabular}\end{small}\end{center}

\noindent The \R1 estimate is nearly identical to the Classical mean ($\by_a$) but has confidence interval only 76\% as long.

\begin{rem}
The ``Direct'' standard deviation estimates in \ref{tab5} refer to conditional resampling $\bya_a\sim\bern(\hbpi_a)$. If instead we used unconditional resampling with $\bya_a$ nonparametrically resampled from $\by_a$, the direct $\hsd_a$ estimate would be identical to the classical sd.
\label{rem9}
\end{rem}

\begin{rem}
The ratio of classical to direct variances in \ref{tab4} is $(0.026/0.018)^2=2.09$. For estimating $E\{y\}$, it turns out that the ratio
\begin{equation}
R=\left(\frac{\hsd_{\cla}}{\hsd_a}\right)^2
\label{411}
\end{equation}
generally has $R\doteq2.0$, as the following argument shows.

In step 1 of \ref{fig1}, $\bya_a\sim\bern(\hbpi_a)$ with $\hbpi_a$ as in \eqref{28}. Letting
\begin{equation}\begin{aligned}
\theta_a&=\frac1{n_a}\sum_1^{n_a}\pi_a(i),\\
\hthe_a&=\frac1{n_a}\sum_1^{n_a}\hpi_a(i),\\
\barya_a&=\frac1{n_a}\sum_1^{n_a}y_a(i)^*,
\end{aligned}\label{412}
\end{equation}
and writing
\begin{equation}
\barya_a-\theta_a=\left(\hthe_a-\theta_a\right)+\left(\barya_a-\hthe_a\right)
\label{413}
\end{equation}
gives an expression for the marginal variance of $\barya$,
\begin{equation}
\var(\barya_a)=\var\left(\hthe_a\right)+\var(\barya_a\mid\hbpi_a).
\label{414}
\end{equation}

We have, from \eqref{219} and \eqref{41},
\begin{equation}
\var\left(\hthe_a\right)\doteq\bv'L_aG^{-1}L_a'\bv/n_a^2\qquad\left(v(i)=\hpi_a(i)(1-\hpi_a(i))\right)
\label{415}
\end{equation}
and
\begin{equation}
\var(\barya_a\mid\hbpi_a)=\frac1{n_a^2}\sum_1^{n_a}v(i).
\label{416}
\end{equation}
But, as shown in the following lemma, \eqref{415} equals \eqref{416}, so
\begin{equation}
\var(\barya_a)\doteq2\var\left(\hthe_a\right).
\label{417}
\end{equation}
Since $\var(\barya_a)$ equals, approximately, $\sd^2_{\cla}$, this shows that $R\doteq2$. (If $\hbpi_a$ is unbiased for $\bpi_a$ then $\var(\barya_a)$ will equal $\var(\bary_a)$; the difference $\var(\bary_a)-\var(\barya_a)$ is asymptotically negligible.) Recalculating \ref{tab4} for the 100 datasets in Pew1:100 gave average $R$ value 2.16 with standard deviation 0.27.
\label{rem10}
\end{rem}

\begin{lem}
The right-hand sides of \eqref{415} and \eqref{416} are the same.
\end{lem}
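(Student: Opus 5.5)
The identity should follow from a weighted projection argument, using only the fact that the intercept column of $L_a$ is the vector of 1s. Throughout, $V_a=\diag(v(i))$ is evaluated at $\hbpi_a$, so that $G=L_a'V_aL_a$ in \eqref{215} and the vector $\bv$ in \eqref{415} refer to the same weights. The first step is to rewrite the vector $\bv$ as $\bv=V_a\bone_a$. The right-hand side of \eqref{415}, multiplied by $n_a^2$, then becomes
\begin{equation*}
\bone_a'V_aL_a\left(L_a'V_aL_a\right)^{-1}L_a'V_a\bone_a .
\end{equation*}
The right-hand side of \eqref{416}, multiplied by $n_a^2$, is $\sum v(i)=\bone_a'V_a\bone_a$. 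So it suffices to show that the two quadratic forms in $\bone_a$ agree.

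Next I would introduce the matrix $P=L_a(L_a'V_aL_a)^{-1}L_a'V_a$. This is the $V_a$-weighted least squares projection onto the column space of $L_a$, i.e.\ the hat matrix of the weighted regression underlying the IRLS fit of model$_a$ \eqref{26}. It is invertible because $G$ is nonsingular whenever $\hbet$ exists. For any vector of the form $L_a c$ we have $PL_ac=L_ac$. Since $L_a=(\bone,\bl_a)$, we can write $\bone_a=L_ae_1$ with $e_1=(1,0)'$. Therefore
\begin{equation*}
P\bone_a=L_a(L_a'V_aL_a)^{-1}(L_a'V_aL_a)e_1=L_ae_1=\bone_a .
\end{equation*}

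The final step substitutes this into the quadratic form: $\bone_a'V_aP\bone_a=\bone_a'V_a\bone_a=\sum_i v(i)$. Dividing by $n_a^2$ gives the lemma. The argument extends unchanged to the polynomial models of \ref{rem7}, since those also contain an intercept column.

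I do not expect a real obstacle. The only point needing care is bookkeeping: \eqref{415} and \eqref{416} must use the same weights $v(i)=\hpi_a(i)(1-\hpi_a(i))$, and \eqref{415} must be read with the delta-method vector $V_a\hbdel_a=V_a\bone_a/n_a$ from \eqref{44}. Once that identification is made, the key observation is simply that the intercept lies in the column space of $L_a$. It is also worth remarking that this is the same mechanism that makes the MLE reproduce $\sum\hpi_a(i)=\sum y_a(i)$ in a GLM with intercept.
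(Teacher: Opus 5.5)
Your argument is correct and is essentially the paper's own proof. The paper whitens to $\tilde L=V^{1/2}L_a$ and notes that $\bv^{1/2}=V^{1/2}\bone_a$ is the first column of $\tilde L$, so its orthogonal projection onto $\mathrm{col}(\tilde L)$ is itself; your identity $P\bone_a=\bone_a$ is the same fact stated for the $V_a$-weighted hat matrix. One harmless slip: $P$ is an idempotent of rank $2<n_a$, so it is not invertible (you presumably meant it is well defined because $G$ is nonsingular), but you never use invertibility.
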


\begin{proof}
Letting $\till=V^{1/2}L_a$ with $V=\diag(\bv)$, and $\tbv=\bv^{1/2}$, \eqref{415} becomes
\begin{equation}
\var\left(\hthe_a\right)\doteq\frac{\tbv'\till\left(\till'\till\right)^{-1}\bL\tbv}{n_a^2}.
\label{418}
\end{equation}
The right-hand side is the squared length of projection of $\tbf_a$ into $\callcol(\till_a)$, the column space of $\till_a$. But
\begin{equation}
\till=V^{1/2}(\bone,\bl_a)=(\bv^{1/2}V^{1/2}\bl_a),
\label{419}
\end{equation}
so $\tbv$ is in $\callcol(\till)$. Therefore, equating the lengths of projection,
\begin{equation}
\frac1{n_a^2}\bv_a'L_aG^{-1}L_a'\bv_a=\frac1{n_a^2}\sum_{i=1}^{n_a}v(i).
\label{420}
\end{equation}
\end{proof}

\begin{figure}[htbp]
\centering
\includegraphics{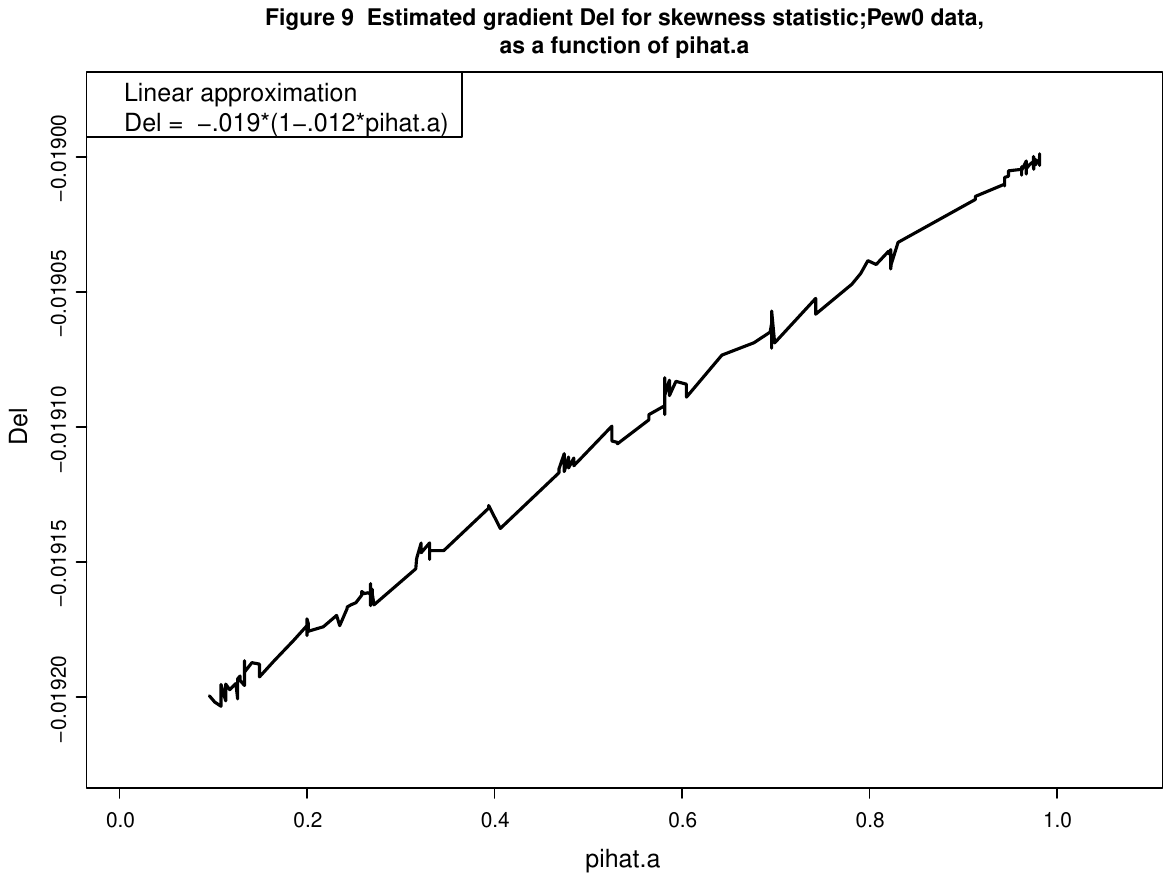}
\caption{Estimated gradient $\bdel$ for skewness statistic; Pew0 data, as a function of $\hpi_a$.}
\label{fig9}
\end{figure}

\begin{rem}
Aside from the mean \eqref{41}, most parameters $\theta=T(\bpi)$ don't enjoy simple formulas for the gradient vector $\bdel=dT(\bpi)/d\bpi$. However, the bootstrap computations in \R1 can be reverse-engineered to estimate $\bdel_a$ \eqref{218} and $\hbdel_b$ \eqref{223} for statistics $t(\by)$ \eqref{210} that do not depend on $\bx$.

For a given $i$ in $1,\dots,n_a$, the expectation $\hthe_a=\eepa\{t(\by_a)\}$ can be expressed as
\begin{equation}
\hthe_a=\hpi_a(i)\eepa\{t(\by_a)\mid y_a(i)=1\}+(1-\hpi_a(i))\eepa\{t(\by_a)\mid y_a(i)=0\},
\label{421}
\end{equation}
so
\begin{equation}\begin{aligned}
\hdel_a(i)&=\frac{\partial T(\hbpi_a)}{\partial\hpi_a(i)}\\
          &=\eepa\{t(\by_a)\mid y_a(i)=1\}-\eepa\{t(\by_a)\mid y_a(i)=0\}.
\end{aligned}\label{422}
\end{equation}

Let $\bY$ be the $B\times n_a$ matrix having the $j$th bootstrap replication $\bya_a$ (step 1 in \ref{fig4}) as its $j$th row; $\bY_1(i)$ the version of $\bY$ that has column $i$ changed to all 1s; and $\bY_0(i)$ with the $i$th column changed to all 0s. Define $\hthe_1(i,j)$ as the estimate $t(\by_a)$ computed from the $j$th row of $\bY_1(i)$, and similarly $\hthe_0(i,j)$ from $\bY_0(i)$. Averaging $\hthe_1(i,j)-\hthe_0(i,j)$ over $j=1,\dots,B$ provides an estimate of $\Delta_a(i)$,
\begin{equation}
\hdel_a(i)=\frac1{B}\sum_{j=1}^B\left[\hthe_1(i,j)-\hthe_0(i,j)\right].
\label{423}
\end{equation}

Formula \eqref{423} was applied to the response vector $\by_a$ of the Pew0 data with the statistic $t(\by_a)$ the empirical skewness. \ref{fig9} graphs $\hdel_a(i)$ versus $\hpi_a(i)$, showing a slightly noisy linear increase. This is misleading: a linear fit to the graph gives
\begin{equation}
\hbdel_a\doteq-0.019\cdot(1-0.012\cdot\hbpi_a).
\label{424}
\end{equation}
That is, $\hbdel_a$ is nearly flat, varying by only 1.2\% as $\hbpi_a$ goes from 0 to 1. This isn't the same as the mean \eqref{41} where $\bdel_a$ is perfectly flat, but it's close.

Gradient vector $\hbdel_b$ \eqref{223} also can be estimated from formula \eqref{423}, now applied to the bootstrap replications $\bya_b$, step 4 in \ref{fig4}. For the Pew0 skewness example the calculations show
\begin{equation}
\hbdel_b\doteq\hbdel_a/3;
\label{425}
\end{equation}
the factor 3 follows from $n_b=3n_a$. Having $\hbdel_b\doteq\hbdel_a\cdot(n_a/n_b)$ results in $\hsd_a\doteq\hsd_b$, from the same argument in \eqref{43} through \eqref{410}. All of this supports the conjecture that, in the binary case, $\hthe_b$ is no improvement on $\hthe_a$ for statistics $t(\bx,\by)$ that depend only on $\by$.
\label{rem11}
\end{rem}

\section{Quantitative response}\label{sec5}

Quantitative data, where the response variable $y$ is real-valued rather than binary, presents a greater challenge for prediction-powered inference. The reason is simple: there is only one binary distribution \eqref{22} but a host of quantitative possibilities. This section discusses a quantitative-response version of model$_a$ \eqref{26} and an algorithm \R2 for carrying out the PPI calculations. It has close connections with the pioneering methods of \cite{wang}. \ref{rem12} say a little more about the relationship.

\begin{table}[htbp]
\caption{Five of 400 subjects in labeled dataset of Cens0. Shown are 5 of 8 predictor variables, $y_a$ the amount of tax paid and $f_a$ the machine learning prediction.}
\begin{center}\begin{small}\begin{tabular}{cccccrc}
AGEP&SCHL&MAR&SEX&RAC1P& $y_a$&$f_a$\\\hline
  58&  17&  1&  1&    3& 43700&38062\\
  65&  16&  3&  1&    1& 10400&17862\\
  68&  22&  2&  1&    1&241900&78775\\
  67&  20&  1&  2&    1& 10290&23968\\
  42&  12&  1&  1&    8& 40000&34329
\end{tabular}\end{small}\end{center}
\label{tab8}
\end{table}

\ref{tab8} shows a small part of the Census Data, an example featured in this section: 160,000 California residents reported their income $y$ (truncated at 250,000) and eight demographic prediction variables $x$. The data from 80,000 of them were used to calculate a machine learning prediction function $f(x)$; the remaining 80,000 were used to provide randomly selected sets of labeled and unlabeled data \eqref{11}--\eqref{12}. Cens0, my main example, has a labeled dataset of size $n_a=400$ and an unlabeled set of $n_b=1200$ cases ($y$ values erased). Cens1:100 comprises 100 additional datasets, each with $n_a=400$ and $n_b=1200$ members.

The quantitative analogue of the binary response model$_a$ \eqref{26} is model$_q$,
\begin{equation}
y_a(i)=\mu_a(i)+\sigma_a(i)\cdot\epsilon(i)\qquad i=1,\dots,n_a,
\label{51}
\end{equation}
a heteroskedastic linear regression model. Here $\mu_a(i)$ and $\sigma_a(i)$ are real-valued functions $m\pdot$ and $s\pdot$ of $f_a(i)$,
\begin{equation}
\mu_a(i)=m(f_a(i))\quand\sigma_a(i)=s(f_a(i)),
\label{52}
\end{equation}
while the $\epsilon(i)$ are iid draws from a given probability density $p$ having mean 0 and variance 1,
\begin{equation}
\epsilon(i)\iid p\pdot,\qquad i=1,\dots,n_a.
\label{53}
\end{equation}
The following discussion assumes conditional inference where $\bx_a$ and $\bmf_a$ are considered fixed.

\begin{figure}[htbp]
\centering
\includegraphics{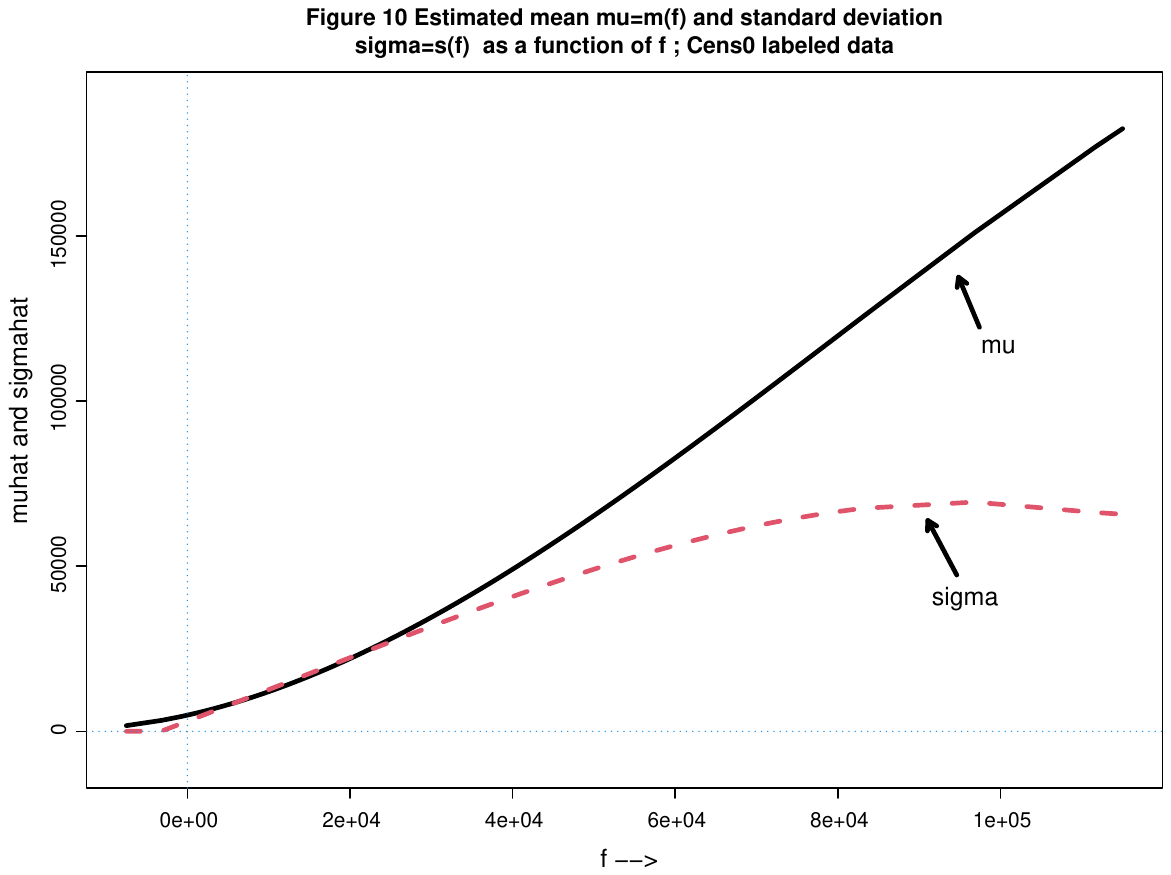}
\caption{Estimated mean $\mu=m(f)$ and standard deviation $\sigma=s(f)$ as a function of $f$; Cens0 labeled data.}
\label{fig10}
\end{figure}

\R2 uses low-dimensional least squares regression models on the labeled data to estimate the function $m$ and successive differences of the responses $y_a(i)$ to estimate $s$, as described in \ref{rem13}. \ref{fig10} shows the estimates $\hatm(f)$ and $\hats(f)$ based on the 400 values of $y_a(i)$ in the labeled Cens0 data. The ratio $m(f)/f$ increases from 1.0 to about 1.5 for large values of $f$.

Model$_a$ assumed $\by_a\sim\bern(\bpi_a)$ \eqref{26}. A more complicated definition of the generative distribution $\bpi_a$ is required for model$_q$:
\begin{align}
\bpi_a&=(\bmu_a,\bsig_a,p),\label{54}\\
\intertext{where}
\bmu_a&=(\cdots\mu_a(i)\cdots),\notag\\
\bsig_a&=(\cdots\sigma_a(i)\cdots),\notag
\end{align}
and $p$ is the probability distribution \eqref{53} for the errors $\epsilon(i)$. At step 2 of \ref{fig2}, $\hbet$ is now defined as
\begin{equation}
\hbet=(\hatm,\hats,\hro),
\label{55}
\end{equation}
$\hatm,\hats$ and $\hro$ being estimates of $m,s$, and $p$ based on $\bx_a,\by_a$ and $\bmf_a$. (See \ref{rem13} for details.)

With this change, \ref{fig2} and \ref{fig4} describe the logic of \R2 as well as \R1. For instance,
\begin{equation}
\hbpi_a=(\hbmu_a=\hatm(\bmf_a),\hbsig_a=\hats(\bmf_a),\hro)
\label{56}
\end{equation}
generates
\begin{equation}
y_a^*(i)=\hmu_a(i)+\hsig_a(i)\cdot\hep_i,\qquad i=1,\dots,n_a,
\label{57}
\end{equation}
at step 1 of \ref{fig4} with $\hep_i\iid\hro$. Similarly,
\begin{equation}
\hbpi_b=(\hbmu_b=\hatm(\bmf_b),\hbsig_b=\hats(\bmf_b),\hro)
\label{58}
\end{equation}
gives
\begin{equation}
y_b^*(i)=\hmu_b(i)+\hsig_b(i)\cdot\hep_i
\label{59}
\end{equation}
at step 4 of \ref{fig4} in \ref{sec3}.

\begin{figure}[htbp]
\centering
\includegraphics{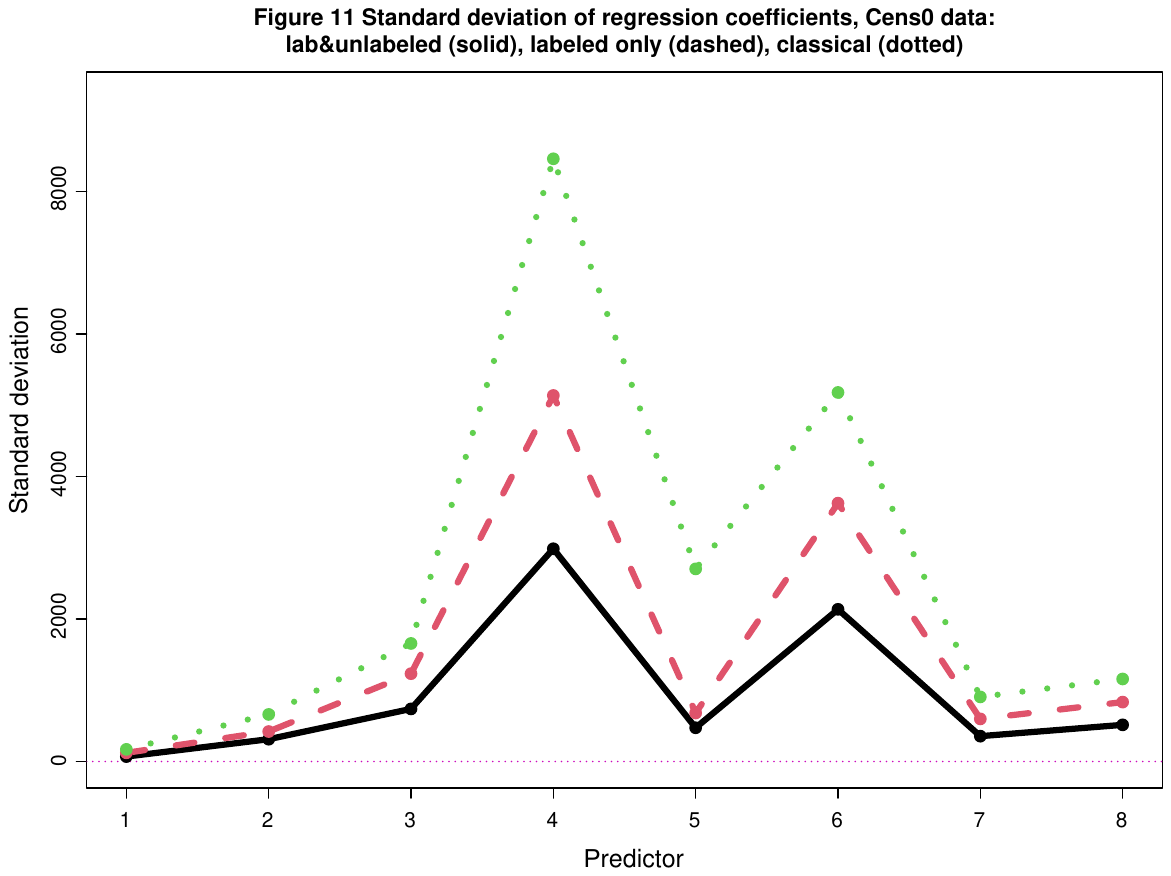}
\caption{Standard deviations of regression coefficients, Cens0 data; labeled and unlabeled (solid), labeled only (dashed), classical (dotted).}
\label{fig11}
\end{figure}

\R2 was applied to the estimation of the linear model regression coefficients for the Cens0 dataset. Standard deviations of the point estimates are graphed for three estimators in \ref{fig11}:
\begin{itemize}
\item Classical: the usual least squares variance formula for the regression coefficients from $y_a\sim x_a$.
\item Labeled data: from \R2 applied to $\bx_a$, $\by_a$, $\bmf_a$.
\item Labeled and unlabeled: \R2 applied to all the Cens0 data.
\end{itemize}

\noindent The estimates are conditional with $\bx_a$ and $\bx_b$ considered fixed.

The story is almost the same as in \ref{fig3}: using all the data, labeled and unlabeled reduces standard deviations by about 60\% compared to the classical estimates. Using just the labeled data gives about half as much reduction in variability. As far as means are concerned, the three methods had similar expectations with no significant differences observed.

\begin{figure}[htbp]
\centering
\includegraphics{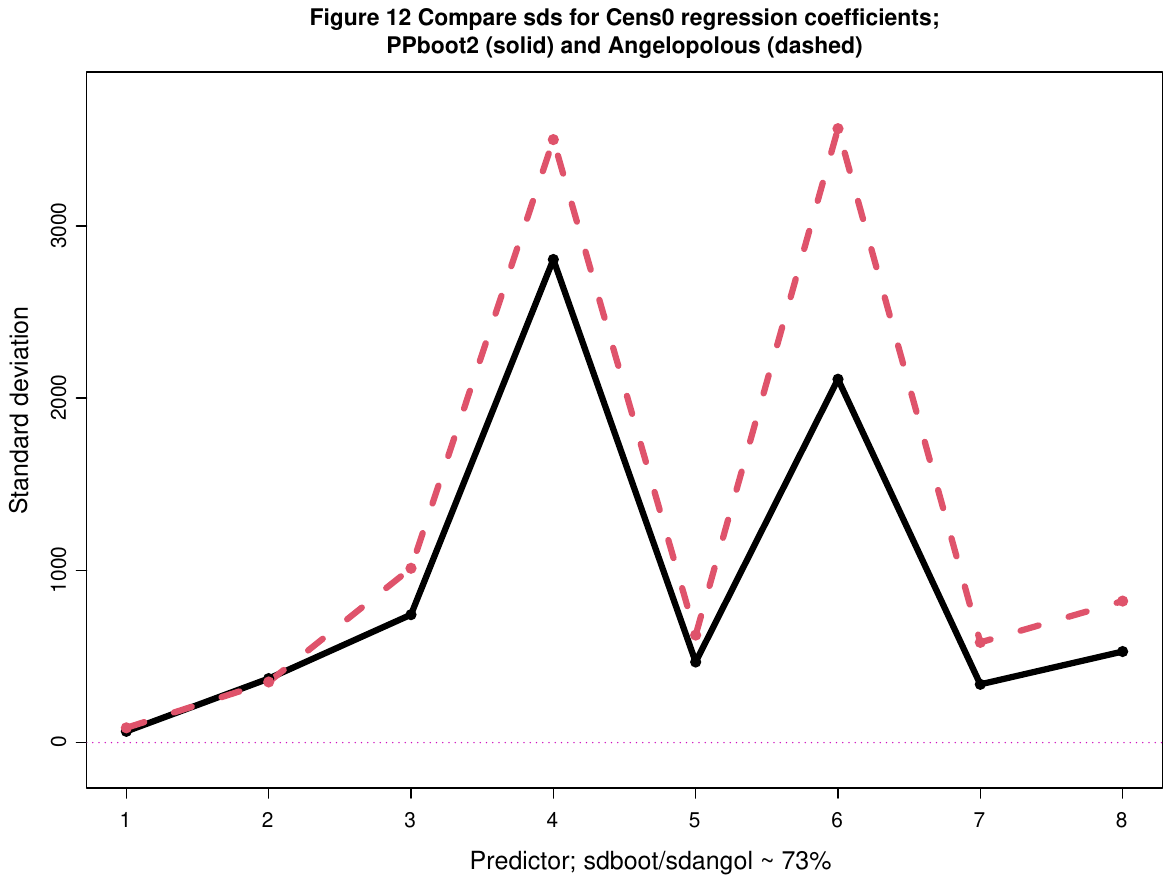}
\caption{Comparison of sds for Cens0 regression coefficients; \R{2} (solid) and \cite{angel} (dashed).}
\label{fig12}
\end{figure}

\ref{fig12} compares standard deviations of the \R2 estimates (the solid curve in \ref{fig11}) with those obtained from the algorithm of \cite{angelx}. Over the eight predictors, the \R2 sds are about 74\% as large. The Angelopoulos calculations proceeded as in \eqref{35}--\eqref{36}, with $D=\|\by-\bx\hthe\|^2$. They perform better here than in \ref{fig6}, with about 34\% increased sd compared to \R2, which is perhaps a reasonable price to pay for not requiring the assumptions of model$_q$. (\R2 was run using unconditional resampling at the first level, step 1 in \ref{fig4}, and conditional resampling at step 4.)

\begin{figure}[htbp]
\centering
\includegraphics{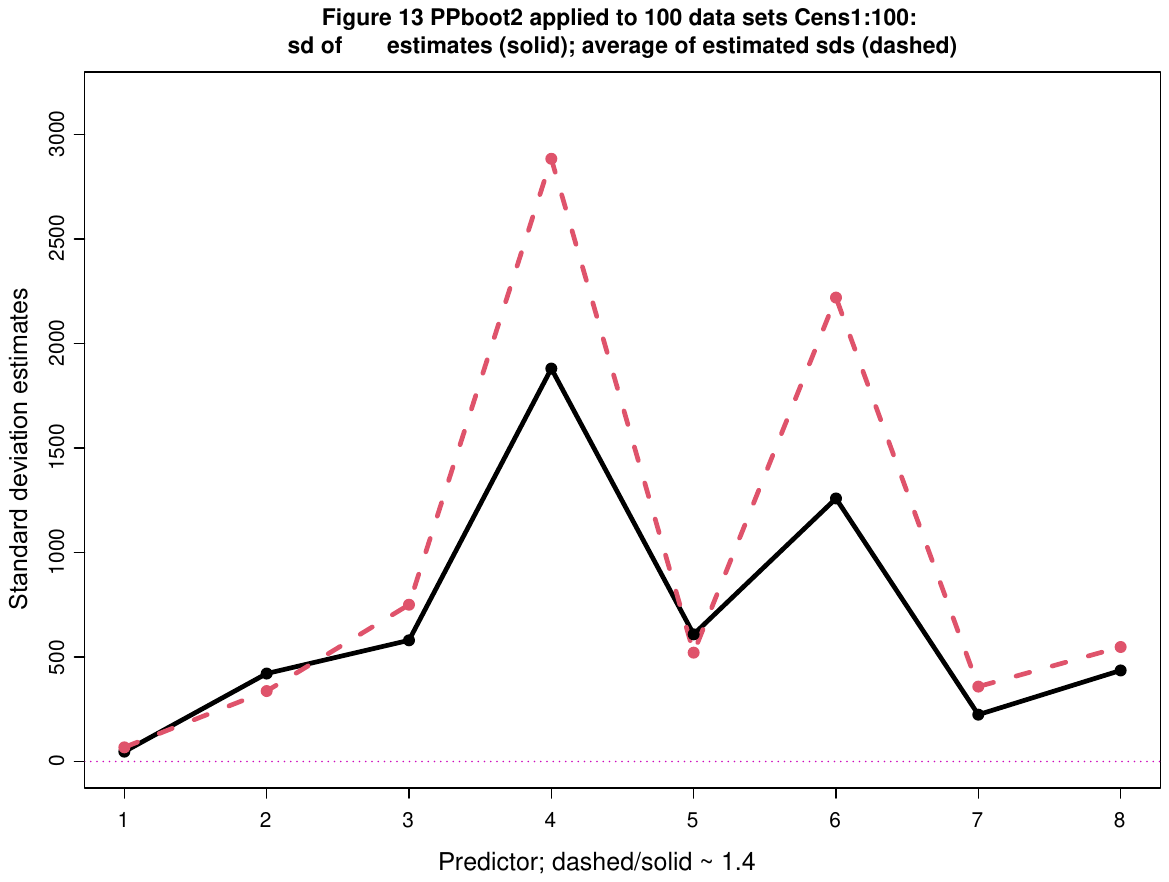}
\caption{\R{2} applied to 100 datasets Cens1:100; sd of $\hthe_b$ estimates (solid), average of estimated sds (dashed).}
\label{fig13}
\end{figure}

\ref{fig13} concerns the legitimacy of the bootstrap confidence intervals from \R2: are they wide enough? To put it negatively, are the standard deviation estimates $\hsd_b$, shown as the solid curves in \ref{fig11} and \ref{fig12}, too small? As an answer, the simulation test of \ref{fig5} was repeated for the Census Data regression coefficients. \R2 was run for each of the 100 datasets Cens1:100, each with $n_a=400$ and $n_b=1200$. The solid curve in \ref{fig13} tracks the external standard deviation \eqref{312}; that is, the empirical sd of the 100 estimates $\hthe_b$ (for each of the eight predictors). The dashed curve is the internal standard deviation \eqref{313}, i.e., the average of the 100 $\hsd_b$ estimates.

As in \ref{fig5}, \ref{fig13} suggests that the bootstrap estimates $\hsd_b$ exceed the true standard deviations by about 40\% for the eight predictors. The figure was recalculated using only the labeled data, that is, for $\hthe_a$ and $\hsd_a$, as in \ref{rem8} of \ref{sec3}. In this case, however, the recalculated figure looked almost the same as \ref{fig13}, still indicating substantial overcoverage.

\begin{table}[htbp]
\caption{Average \R2 estimates $\sd_a$ and $\sd_b$ of the mean, sd, skewness, and kurtosis of the distribution of $y$, for the 100 datasets Cens1:100.}
\begin{center}\begin{small}\begin{tabular}{lccll}
       &Mean&Sd&  Skew&Kurt\\\hline
$\sd_b$&2070&1871&.096&.471\\
$\sd_a$&1837&2506&.126&.612\\
\\
ratio  &1.13& .75&.77 &.78
\end{tabular}\end{small}\end{center}
\label{tab9}
\end{table}

\R2 was applied to the Cens0 data for four functions of the responses $y$: mean, standard deviation, skewness, and kurtosis. \ref{tab9} provides standard deviations for the two methods and their ratio. Unlike the binary model results in \ref{tab6}, using all the data, labeled and unlabeled, performs better than using labeled alone for Sd, Skew, and Kurt, but not for Mean. Once again, prediction-powered inference seems to be ineffective for estimating the expectation of $y$, as discussed next.

The binary response formulas \eqref{219} and \eqref{221} gave approximations $\hsd_a$ and $\hsd_b$ for $\hthe_a$ and $\hthe_b$. Similar results apply in the quantitative response case. Now let $L_a$ be a matrix of dimension $n_a$ by df$+1$,
\begin{equation}
L=(\bone,\text{poly}(\bmf,\text{df}),
\label{510}
\end{equation}
the R notation indicating an orthogonal polynomial basis of degree df in the elements of $\bmf_a$. \R2 uses a $C_p$ criterion to choose the most predictive value of df between 1 and 5 (\ref{rem13}).

Under model \eqref{54} the OLS covariance matrix for the estimate $\hbmu_a$ is
\begin{equation}
\cov(\hbmu_a)=L_aG^{-1}L_a',
\label{511}
\end{equation}
where
\begin{equation}
G=L_a'\diag(\bs^{-2})L_a,
\label{512}
\end{equation}
the notation indicating a diagonal matrix with entries $s(f_a(i))^{-2}$.

Suppose the parameter of interest $\theta_a$ is a function of $\bmu_a$, say,
\begin{align}
\theta_a&=T(\bmu_a),\notag\\
\intertext{for instance}
T(\bmu_a)&=\frac{\sum_1^{n_a}\mu_a(i)}{n_a}\for\theta=E\{y\}.\label{513}
\end{align}
Let $\bdel_a$ be the gradient vector
\begin{equation}
\bdel_a=\left.\frac{dT(\bmu)}{d\bmu}\right|_{\bmu_a}.
\label{514}
\end{equation}
The delta method approximation for the standard deviation of $\hthe_a=T(\hbmu_a)$ is
\begin{equation}
\hsd_a=\left[\bdel_a'L_aG^{-1}L_a'\bdel_a\right]^{1/2}.
\label{515}
\end{equation}
Similarly, the estimated standard deviation for $\hthe_b=T(\hbmu_b)$ is
\begin{align}
\hsd_b&=\left[\bdel_b'L_bG^{-1}L_b'\bdel_b\right]^{1/2},\label{516}\\
\intertext{where}
L_b&=(\bone_b,\text{poly}(\bmf_b,\text{df})),\notag
\end{align}
$\bone_b$ the vector of $n_b$ 1s.

The expectation parameter $\theta_a=E\{\by_a\}$ corresponds to \eqref{513}, for which $\bdel_a=\bone/n_a$. Because poly($\bmf_a$,df) is orthogonal to $\bone_a$ in \eqref{510}, we get
\begin{equation}
\bdel_a'L_a=(1,\bzer)
\label{517}
\end{equation}
and \eqref{515} becomes
\begin{equation}
\hsd_a=[G^{-1}]_{11}^{1/2}.
\label{518}
\end{equation}

The same argument gives $\bdel_b=\bone_b/n_b$ and
\begin{equation}
\hsd_b=[G^{-1}]_{11}^{1/2},
\label{518x}
\end{equation}
equaling $\hsd_a$ and strengthening the conclusion that unlabeled data doesn't help in the estimation of $E\{y\}$.

\section*{Bias}
In \ref{fig2}, the true generative model $\bpi_a$ produces an estimated model $\hbpi_b$ and a point estimate $\hthe_b=T(\hbpi_b)$ for $\theta=T(\bpi_b)$. The function $T\pdot$ is smoothly defined because of the expectation step $T(\bpi)=E_{\bpi}\{t(\bx,\by)\}$, making it plausible that $\hthe_b$ is nearly unbiased for $\theta_b$, in the sense of maximum likelihood, where bias decreases an order of magnitude faster than standard deviation.

The relevance of this argument depends on $T(\bpi)=E_{\bpi}\{t(\bx,\by)\}$ being the actual parameter of interest. As an unfavorable example, suppose $t(\bx,\by)$ is the usual estimate of $R^2$ from the linear model $\by\sim\bx$. Applying \R2 to the Cens0 data gave $T(\hbpi_a)=0.253$ with standard deviation $\hsd_a=0.033$. This was 0.299 standard deviations above the classical estimate $t(\bx,\by)=0.241$ and indicates a substantial upward bias. (\R2 used nonparametric bootstrapping at step 1 of \ref{fig4}, so these calculations didn't involve the modeling assumptions \eqref{51}.)

The bootstrap confidence interval algorithm \texttt{bc} (\texttt{bca} with $a=0$) of \cite{1987} suggests adding the bias corrector
\begin{equation}
\gamma=\hsd_b\cdot z_0
\label{520}
\end{equation}
to the endpoints of the uncorrected interval $0.253\pm0.033$, where
\begin{equation}
z_0=\Phi^{-1}(P_0),
\label{521}
\end{equation}
$\Phi$ is the standard normal cdf, and $P_0$ is the proportion of bootstrap replicates less than $t(\bx,\by)$; $P_0=0.383$ here, giving $\gamma=-0.010$. The uncorrected 95\% interval $(0.188,0.318)$ becomes $(0.178,0.308)$ after the bias correction. Bias corrector \eqref{520} is always available from the output of \R2 as a warning of potential bias problems.

\begin{rem}
The development in this section is closely related to the seminal paper of \cite{wang}, henceforth WML. Here is a brief review of the connections. WML assume a three-stage model:
\begin{enumerate}
\item A training set of $(x,y)$ pairs provides a \textit{prediction model} $f$ for response $y$ given predictor $x$,
\begin{equation}
\haty=f(x).
\label{522}
\end{equation}
\item A testing set of $(x,y)$ pairs compares predictions $\haty=f(x)$ with the observed $y$ values, from which a \textit{relationship model} $k\pdot$ giving improved predictions,
\begin{equation}
\ypre=k(\haty),
\label{523}
\end{equation}
is constructed.
\item A validation set of $x$ values substitutes $\ypre$ for the missing $y$ values; the pairs $(x,\ypre)$ are used as if they were $(x,y)$ pairs to calculate the parameter of interest.
\end{enumerate}

The training, testing, and validation sets of WML correspond to the background data \eqref{14}, labeled data \eqref{11}, and unlabeled data \eqref{12} of this paper.

There is an important conceptual difference: in this paper the machine learning function $f(x)$ is taken to estimate the expected value $E\{y\mid x\}$, not $y$ itself. This plays a central role in the bootstrap algorithm of \ref{fig4}. The bootstrap-based correction of WML (p.~30269) resamples only at the validation level, that is, only on the unlabeled data. This seems to ignore sampling variability in the form of the relationship function $k\pdot$. The algorithm here requires \textit{two} resampling steps, at steps 1 and 4 in \ref{fig4}, which adds to the estimated variability of $\hthe_b$.
\label{rem12}
\end{rem}

\begin{rem}
\R2 is more speculative than \R1 because model$_q$ \eqref{51}--\eqref{53} is less specific than model$_a$ \eqref{26}. The following choices were made for the version of \R2 used in the previous calculations:
\begin{itemize}
\item $\hatm(f)$ is fit by least squares to $\by$ as a polynomial function of $\bmf$; if not specified, the degree df ($\leq5$) is selected by $C_p$ minimization.
\item $\hats(f)$ is a smoothed version of the successive ordered absolute differences
\begin{equation}
0.886\cdot|y(i+1)-y(i)|;
\label{524}
\end{equation}
the constant 0.886 gives unbiased estimates of the local standard deviation if the $y$s are independently normal.
\item The residual density $p\pdot$ is taken to be a standardized Gamma distribution with $\nu$ degrees of freedom, $\nu$ chosen by a robustified fit to the standardized residuals
\begin{equation}
\frac{y(i)-m(i)}{s(i)}.
\label{525}
\end{equation}
For the examples in this section, \R2 chose df $=4$ and $\gamma=0.924$. Negative values of $\nu$ give reversed Gamma residuals, long-tailed to the left.
\end{itemize}
\label{rem13}
\end{rem}

\section*{Acknowledgments}
I am very grateful to Tijana Zrnic at Stanford for many helpful discussions as well as for the numerical examples.

\bibliographystyle{erae}
\bibliography{ppi}

\end{document}